\newtheorem{theorem}{Theorem}[section]
\newtheorem{observation}[theorem]{Observation}
\newtheorem{lemma}[theorem]{Lemma}
\newtheorem{claim}[theorem]{Claim}
\newtheorem{defn}[theorem]{Definition}}
\newtheorem{remark}[theorem]{Remark}}
\newenvironment{proof}{{\em Proof:}}{\hfill{\hfill\rule{2mm}{2mm}}}
\newcommand{\MakeBig}{\rule[-.2cm]{0cm}{0.4cm}}
\newcommand{\sep}[1]{\,\left|\, {#1} \MakeBig\right.}
\newcommand{\cardin}[1]{\left| {#1} \right|}
\newcommand{\brc}[1]{\left\{ {#1} \right\}}
\newcommand{\pth}[2][\!]{#1\left({#2}\right)}
\newcommand{\pbrcx}[1]{\left[ {#1} \right]}
\newcommand{\Prob}[1]{\mathop{\mathbf{Pr}}\!\pbrcx{#1}}
\newcommand{\Ex}[1]{\mathop{\mathbf{E}}\!\pbrcx{#1}}
\newcommand{\Exs}[2]{\mathbf{E}_{#1}\!\pbrcx{#2}}
\newcommand{\ceil}[1]{\left\lceil {#1} \right\rceil}
\newcommand{\eps}{{\varepsilon}}
\renewcommand{\Re}{{\rm I\!\hspace{-0.025em} R}}
\providecommand{\Matousek}{Matou{\v s}ek\xspace}
\newcommand{\X}{\EuScript{M}}
\newcommand{\eqlab}[1]{\label{equation:#1}}
\newcommand{\Eqref}[1]{Eq.~(\ref{equation:#1})}
\newcommand{\lemlab}[1]{\label{lemma:#1}}
\newcommand{\lemref}[1]{Lemma~\ref{lemma:#1}}
\newcommand{\obslab}[1]{\label{observation:#1}}
\newcommand{\obsref}[1]{Observation~\ref{observation:#1}}
\newcommand{\defref}[1]{Definition~\ref{def:#1}}
\providecommand{\deflab}[1]{\label{def:#1}}
\newcommand{\atgen}{\symbol{'100}}
\newcommand{\SarielThanks}[1]{\thanks{Department of Computer
      Science;
      University of Illinois;
      201 N. Goodwin Avenue;
      Urbana, IL, 61801, USA;
      {\tt sariel\atgen{}uiuc.edu}; {\tt
         \url{http://www.uiuc.edu/\string~sariel/}.} #1}}
\providecommand{\TPDF}[2]{\texorpdfstring{#1}{#2}}
\definecolor{blue25}{rgb}{0,0,0.25}
\newcommand{\emphic}[2]{%
     \textcolor{blue25}{%
         \textbf{\emph{#1}}}%
         \index{#2}}
\newcommand{\emphi}[1]{\emphic{#1}{#1}}
\definecolor{shadecolor}{gray}{0.80}
\newcommand{\VBig}[1]{\rule[-.0cm]{-0.000cm}{#1}}
\providecommand{\ComplexityClass}[1]{{{\textcolor[named]{OliveGreen}{
      \textsc{#1}}}}}
\providecommand{\NPHard}{{\ComplexityClass{NP-Hard}}\xspace}
\providecommand{\APX}{{\ComplexityClass{APX}}}
\newcommand{\Family}{\EuScript{F}}
\newcommand{\FamilyB}{\EuScript{G}}
\newcommand{\PntSet}{{\mathsf{P}}}
\newcommand{\PntSetB}{{\mathsf{Q}}}
\newcommand{\VC}{\ensuremath{\mathsf{VC}}\xspace}
\newcommand{\LP}{\ensuremath{\mathsf{LP}}\xspace}
\newcommand{\VCDim}{\delta}
\newcommand{\VCDimA}{\delta'}
\newcommand{\pnt}{\mathsf{p}}
\newcommand{\pntA}{\mathsf{q}}
\newcommand{\Opt}{{\mathrm{opt}}}
\newcommand{\PriceF}{\mathsf{f}}
\newcommand{\range}{r}
\newcommand{\rangeA}{{r}'}
\newcommand{\DemandChar}{\mathsf{d}}
\newcommand{\Demand}[1]{\mathsf{d}\pth{#1}}
\newcommand{\DemandExt}[2]{\mathsf{d}_{#1}\pth{#2}}
\newcommand{\REnergy}[1]{\mathsf{d}_{\mathrm{res}}\pth{#1}}
\newcommand{\RSample}{\EuScript{R}}
\newcommand{\Cover}[2]{\#\pth{#1 \cap #2}}
\newcommand{\Simplex}{\triangle}
\newcommand{\Instance}{\mathcal{I}}
\newcommand{\InstanceB}{\mathcal{J}}
\newcommand{\ArrX}[1]{\EuScript{A}\pth{#1}}
\newcommand{\ArrVD}[1]{\EuScript{A}_{\!||} \pth{#1}}
\newcommand{\Cell}{\triangle}
\newcommand{\CList}[1]{\mathrm{cl}\pth{#1}}
\newcommand{\Decomp}[1]{\mathrm{dcmp}\pth{#1}}
\renewcommand{\th}{t{}h\xspace}
\providecommand{\remove}[1]{}
\newcommand{\seclab}[1]{\label{sec:#1}}
\newcommand{\secref}[1]{Section~\ref{sec:#1}}
\newcommand{\apndlab}[1]{\label{apnd:#1}}
\newcommand{\apndref}[1]{Appendix~\ref{apnd:#1}}
\newcommand{\xbeginlgox}{\begin{minipage}{1in}\begin{tabbing}
        \quad\=\qquad\=\qquad\=\qquad\=\qquad\=\qquad\=\qquad\=\kill}
\newcommand{\xendlgox}{\end{tabbing}\end{minipage}}
\newcommand{\Cutting}{\EuScript{C}}
\newcommand{\FZero}{\mathcal{U}}
\newcommand{\FAtMost}[2]{f_{\leq {#1}}\pth{#2}}
\newcommand{\Bronniman}{Br{\"o}nnimann\xspace}
\newcommand{\thmref}[1]{Theorem~\ref{theo:#1}}
\newcommand{\thmlab}[1]{\label{theo:#1}}
\newcommand{\clmlab}[1]{\label{claim:#1}}
\newcommand{\clmref}[1]{Claim~\ref{claim:#1}}
\newcommand{\etal}{et al.\ }
\newcommand{\Universe}{\mathsf{U}}
\providecommand{\NP}{\ComplexityClass{NP}\xspace}
\providecommand{\POLYT}{\ComplexityClass{P}\xspace}
\providecommand{\DTIME}{\ComplexityClass{DTIME}}
\newcommand{\GroundSet}{\textsf{X}}
\newcommand{\RangeSpace}{\mathsf{S}}
\newcommand{\RangeSpaceA}{\mathsf{T}}
\newcommand{\RangeSet}{{\mathcal{R}}}
\newcommand{\RangeSetA}{\mathcal{R}'}
\newcommand{\RangeSetB}{\mathcal{Z}}
   \newcommand{\SarielComp}[1]{}
   \newcommand{\NotSarielComp}[1]{#1}
   \newcommand{\SarielComp}[1]{#1}%
   \newcommand{\NotSarielComp}[1]{}%
\begin{document}
\title{On the Set Multi-Cover Problem in Geometric Settings\thanks{A
      preliminary version of this paper appeared in Proc.\ of ACM SoCG, 2009
      \cite{cch-smcpg-09}.}}

\author{Chandra Chekuri%
   \thanks{Department of Computer Science, University of Illinois, 201
      N.\ Goodwin Ave., Urbana, IL 61801, USA. {\tt
         chekuri@cs.illinois.edu}.  Partially supported by NSF grants
      CCF-0728782 and CNS-0721899.  }%
   \and%
   Kenneth L. Clarkson%
   \thanks{IBM Almaden Research Center, San Jose, CA 95120, USA. {\tt
         kclarks@us.ibm.cam}.}  \and %
   Sariel Har-Peled\SarielThanks{}}

\date{\today}
\maketitle

\begin{abstract}
    We consider the set multi-cover problem in geometric
    settings. Given a set of points $\PntSet$ and a collection of
    geometric shapes (or sets) $\Family$, we wish to find a minimum
    cardinality subset of $\Family$ such that each point $\pnt \in
    \PntSet$ is covered by (contained in) at least $\Demand{\pnt}$
    sets. Here $\Demand{\pnt}$ is an integer demand (requirement) for
    $\pnt$. When the demands $\Demand{\pnt}=1$ for all $\pnt$, this is
    the standard set cover problem. The set cover problem in geometric
    settings admits an approximation ratio that is better than that
    for the general version. In this paper, we show that similar
    improvements can be obtained for the multi-cover problem as well.
    In particular, we obtain an $O(\log \Opt)$ approximation for set
    systems of bounded \VC-dimension, where $\Opt$ is the cardinality
    of an optimal solution, and an $O(1)$ approximation for
    covering points by half-spaces in three dimensions and for some
    other classes of shapes.
%We demonstrate the advantage
%    of using the \LP relaxation in an explicit fashion.  Our main
%    technical tool for $O(1)$ approximation is the use of shallow
%    cuttings when the union complexity of a set of shapes is near
%    linear in the number of shapes.  Using some related ideas and an
%    \LP relaxation we give an $O(\log n)$ approximation for computing
%    a maximum independent set of the intersection graphs of a
%    collection of pseudo-disks in the plane. Previous approaches were
%    applicable only for fat and convex shapes.
\end{abstract}

% \setcounter{page}{0} \thispagestyle{empty} \clearpage

%\newpage
%\setcounter{page}{1} 

\section{Introduction}
\seclab{introduction}

The \emphi{set cover} problem is the following. Given a universe
$\Universe$ of $n$ elements and a collection of sets $\Family =
\brc{S_1, \ldots, S_m}$ where each $S_i$ is a subset of $\Universe$,
find a minimum cardinality sub-collection $C \subseteq \Family$ such
that $C$ covers $\Universe$; in other words, the union of the sets in
$C$ is $\Universe$. In the weighted version each set $S_i$ has a
non-negative weight $w_i$ and the goal is to find a minimum weight
cover $C$. In this paper, we are primarily interested in a
generalization of the set cover problem, namely, the \emphi{set
   multi-cover} problem. In this version, each element $e \in
\Universe$ has an integer demand or requirement $\Demand{e}$ and a
multi-cover is a sub-collection $C \subseteq \Family$ such that for
each $e \in \Universe$ there are $\Demand{e}$ \emph{distinct} sets in
$C$ that contain $e$.%
\footnote{A related and somewhat easier variant allows a set to be
   picked multiple times.  In this paper, unless explicitly stated, we
   use ``multi-cover'' for the version where only one copy of a set is
   allowed to be picked.} %
The set cover problem and its variants arise directly and indirectly
in a wide variety of settings and have numerous applications. Often
$\Family$ is available only implicitly, and could have size $m$
exponential in the size of $\Universe$, or even infinite (for example
$\Family$ could be the set of all disks in the plane). The set cover
problem is \NPHard and consequently approximation algorithms for it
have received considerable attention. A simple greedy algorithm, that
iteratively adds a set from $\Family$ that covers the most uncovered
elements, is known to give a $(1 + \ln n)$ approximation, where $n =
\cardin{\Universe}$. (In the weighted case, the algorithm picks the
set with minimum average cost for the uncovered elements.)  Similar
bounds can also be achieved via rounding a linear programming
relaxation.  The advantage of the greedy algorithm is that it is also
applicable in settings where $\Family$ is given implicitly, but there
exists a polynomial time oracle to (approximately) implement the
greedy step in each iteration. It is also known that unless
$\POLYT=\NP$ there is no $o(\log n)$ approximation for the set cover
problem \cite{ly-hamp-94}. Moreover, unless $\NP \subset
\DTIME(n^{O(\log \log n)})$ there is no $(1-o(1)) \ln n$ approximation
\cite{f-tlasc-98}. Thus the approximability of the general set cover
problem is essentially resolved if $\POLYT \neq \NP$. However, there
are many set systems of interest for which the hardness of
approximation result does not apply.  There has been considerable
effort to understand the approximability of set cover in restricted
settings, and previous work has shown that the set cover problem
admits improved approximation ratios in various geometric cases.  In
particular, set systems that arise in geometric settings are the focus
of this paper.

In the geometric setting, we use $(\PntSet, \Family)$ to describe a
set system (also referred to as a range space) where $\PntSet$ is a
set of points and $\Family$ is a collection of sets (also called
\emph{objects} or \emph{ranges}). We are typically interested in the
case where $\Family$ is a set of ``well-behaved shapes''. Examples of
such shapes include disks, pseudo-disks, and convex polygons. The goal
is to cover a given finite set of points $\PntSet$ in $\Re^d$ by a
collection of objects from $\Family$. At a higher level of
abstraction, one can consider set systems of small (or constant) \VC
dimension. In addition to the inherent theoretical interest in
geometric set systems, there is also motivation from applications in
wireless and sensor networks. In these applications the coverage of a
wireless or sensor transmitter can be reasonably approximated as a
disk-like region in the plane. The problem of locating transmitters to
optimize various metrics of coverage and connectivity is a
well-studied topic; see \cite{twdj-cpwsn-08} for a survey.

\Bronniman and Goodrich \cite{bg-aoscf-95}, extending the work of
Clarkson \cite{c-apca-93}, used the reweighting technique to give an
$O(\log \Opt)$ approximation for the set cover problem when the \VC
dimension of the set system is bounded%
\footnote{\Bronniman and Goodrich \cite{bg-aoscf-95} consider the {\em
      hitting set} problem which is the set cover problem in the dual
   range space. In this paper we blur the distinction between set
   cover and hitting set.}%
.  Here $\Opt$ is the size of an optimum solution. Known hardness
results \cite{ly-hamp-94} preclude such an approximation ratio for the
general set cover problem. The reweighting technique and its
application to set cover \cite{c-apca-93, bg-aoscf-95} show that the
approximation ratio for set cover can be related to bounds on
$\eps$-nets for set systems.  Using this observation,
\cite{bg-aoscf-95} showed an improved $O(1)$ approximation ratio for
the set cover problem in some cases, including the problem of covering
points by disks in the plane. Long \cite{l-upsda-01} made an explicit
connection between the integrality gap of the natural \LP relaxation
for the set cover problem and bounds on the $\eps$-nets for the set
system (see also \cite{ers-hsvcs-05}).  This allows $\Opt$ in the
approximation ratio to be replaced by $\PriceF$, where $\PriceF$ is
the value of an optimum solution to the \LP relaxation (i.e., the
optimal fractional solution). Clarkson and Varadarajan
\cite{cv-iaags-07} developed a framework to obtain useful bounds on
the $\eps$-net size via bounds on the union complexity of a set of
geometric shapes. Using this framework they gave improved
approximations for various set systems/shapes.  Recently, Aronov, Ezra
and Sharir \cite{aes-ssena-09}, and Varadarajan \cite{v-enuc-09}
sharpen the bounds of Clarkson and Varadarajan in some cases
\cite{cv-iaags-07}.

The geometric set cover problem induced by covering points by disks in
the plane is strongly \NPHard \cite{fg-oafac-88}; very recently a PTAS
was obtained for this problem \cite{mr-pghsp-09} improving a
previously known constant factor approximation.  Some other geometric
coverage problems are known to be \APX-hard \cite{fmz-mgbag-07}; that
is, there is a constant $c > 1$ such that unless $\POLYT=\NP$, there
is no $c$ approximation for them.

\paragraph{Our results.}
In this paper, we consider the multi-cover problem in the geometric
setting. In addition to the set system $(\PntSet,\Family)$, each point
$\pnt \in \PntSet$ has an integer demand $\Demand{\pnt}$. Now a 
cover needs to include, for each point $\pnt$,
$\Demand{\pnt}$ sets that contain $\pnt$.  For
general set systems, the greedy algorithm and other methods such as
randomized rounding, which work for the set cover problem, can be
adapted to the multi-cover problem, yielding a $(1 + \ln n)$
approximation (see \cite{v-aa-01}). In contrast, the $\eps$-net based
approach for geometric set cover does not generalize to the
multi-cover setting in a straight-forward fashion. Nevertheless, we
are able to use related ideas, in a somewhat more sophisticated way,
to obtain approximation ratios for the geometric set multi-cover
problem that essentially match the ratios known for the corresponding
set cover problem. In particular, we obtain the following bounds. In
all the bounds, $\PriceF \le \Opt$ is the value of an optimum
(fractional) solution to the natural \LP relaxation, and $\Opt$ is the
value of an optimum (integral) solution.
\begin{itemize}
    \item $O(\log \PriceF)$ approximation for set
    multi-cover of set systems of bounded \VC dimension.

    \item $O(1)$ approximation for (multi) covering points in $\Re^3$
    by halfspaces. This immediately leads to a similar result for
    multi-cover of disks by points in the plane.

    \item $O(\log \log \log \PriceF)$ approximation for covering
    points by fat triangles (or other fat convex polygonal shapes of
    constant descriptive complexity) in the plane.
\end{itemize}

The second and third results follow from a general framework for a
class of ``well-behaved'' shapes based on the union complexity of the
shapes. This is inspired by a similar framework from
\cite{cv-iaags-07, aes-ssena-09}. Our work differs from previous work
for set cover in geometric settings in two ways. First, we use the \LP
relaxation in an explicit fashion in several ways, demonstrating its
effectiveness. Second, our work points out the usefulness of shallow
cuttings for the multi-cover problem. We hope that these directions
will be further developed in the future.

%%%%%%%%%%%%%%%%%%%%%%%%%%%%%%%%%%%%%%%%%%%%%%%%%%%%%%%%%%%%%%%%%% 
%%%%%%%%%%%%%%%%%%%%%%%%%%%%%%%%%%%%%%%%%%%%%%%%%%%%%%%%%%%%%%%%%% 

\section{Preliminaries}
\seclab{preliminaries}

%%%%%%%%%%%%%%%%%%%%%%%%%%%%%%%%%%%%%%%%%%%%%%%%%%%%%%%%%%%%%%%%%% 
%%%%%%%%%%%%%%%%%%%%%%%%%%%%%%%%%%%%%%%%%%%%%%%%%%%%%%%%%%%%%%%%%% 

% \subsection{Problem definition}

\subsection{Problem statement and notation}

Let $\Instance=(\PntSet, \Family)$ be a given set system with \VC
dimension $\VCDim$. Here $\PntSet$ is a set of points, and $\Family$
is a collection of subsets of $\PntSet$, called \emph{ranges}
or \emph{objects}.  Assume that every point $\pnt \in
\PntSet$ has an associated integral \emphi{demand} $\DemandExt{\Instance}{\pnt}
\geq 0$. When the relevant set system is understood, we may write
$\Demand{\pnt}$.  Here we would like to find a minimum cardinality set of
ranges of $\Family$ that covers $\PntSet$, such that every $\pnt \in
\PntSet$ is covered at least $\Demand{\pnt}$ times.
Note that we allow a range of $\Family$ to be
included only once in the cover. This is an instance of the \emphi{set
multi-cover} problem. There is also a weaker version of the problem, where
the solution may be a multiset; that is, a range may be included multiple times.

We will also discuss the demand of a set $\PntSet'\subset\PntSet$,
which is $\Demand{\PntSet'} = \DemandExt{\Instance}{\PntSet'}=
\sum_{\pnt\in\PntSet'} \Demand{\pnt}$.  The \emphi{total demand} of a
set system $\Instance=(\PntSet, \Family)$ is $\Demand{\PntSet}$.

% The problem is easily seen of course \NPHard, and we would be
% interested in approximation algorithms for it.

\begin{defn}
  For a point $\pnt \in \PntSet$ and a set $X \subseteq \Family$ where
  each range in $\Family$ has a non-negative weight, let $\Cover{\pnt}{X}$
  denote the \emphi{depth} of $\pnt$ in $X$; namely, it is the total
  weight of the ranges of $X$ covering $\pnt$. If the ranges do not
  have weights then we treat them as having weight one.
  \deflab{depth}
\end{defn}

\begin{defn}
  Given a multiset $\RangeSetB \subseteq \Family$, let $\InstanceB =
  (\PntSetB, \FamilyB) = (\PntSet, \Family) \setminus \RangeSetB$
  denote the \emphi{residual} set system.  The residual instance
  encodes what remains to be covered after we use the coverage
  provided by $\RangeSetB$. Each $\pnt\in\PntSet$ has \emphi{residual
    demand} $\REnergy{\pnt, \RangeSetB} = \max \pth{\Demand{\pnt} -
    \Cover{\pnt}{\RangeSetB}, 0}$, and $\PntSetB$ comprises the points
  of $\PntSet$ with nonzero residual demand.  Thus
  $\DemandExt{\InstanceB}{\pnt} = \REnergy{\pnt, \RangeSetB}$.  Also
  $\FamilyB = \Family \setminus \RangeSetB$.  We will also write, for
  $\PntSetB'\subset\PntSetB$, $\REnergy{\PntSetB', \RangeSetB} =
  \sum_{\pnt\in\PntSetB'} \REnergy{\pnt, \RangeSetB}$.  In particular,
  $\REnergy{\PntSetB, \RangeSetB} = \DemandExt{\InstanceB}{\PntSetB}$
  is the \emphi{total residual demand} of $\Instance$, with respect to
  $\PntSetB$.

    \deflab{residual}
\end{defn}

A set system $(\PntSet, \Family)$ has \VC dimension $\VCDim$ if no
subset of $\PntSet$ of cardinality greater than $\VCDim$ is
\emphi{shattered} by $\Family$. Here a set $\PntSet' \subseteq
\PntSet$ is shattered if for every $X \subset \PntSet'$ there is a
range $\range \in \Family$ such that $X = \range \cap \PntSet'$. Given
a range space $S = (\PntSet, \Family)$, its dual set system is $S^* =
(\Family, \PntSet^*)$ where $\PntSet^* = \{ \Family_{\pnt} \mid \pnt
\in \PntSet\}$ and $\Family_\pnt = \{ \range \in \Family \mid \pnt \in
\range\}$.  For a set system $S$ with \VC dimension $\VCDim$, we
denote by $\VCDim^*$ the \VC dimension of $S^*$. It is known that
$\VCDim^* \le 2^{\VCDim+1}$ \cite{pa-cg-95, h-gaa-08}; thus if $S$ has
bounded \VC dimension, so does $S^*$. However, for specific set
systems of interest, in particular geometric set systems, one can
directly show much stronger upper bounds on $\VCDim^*$.

\subsection{\TPDF{\LP}{L{}P} relaxation}

A standard approach to computing an approximate solution to an NP-hard
problem is to solve a linear programming relaxation (\LP) of the
problem and round its fractional solution to an integral solution to
the original problem.

In our case, if $\Family = \brc{\range_1, \ldots, \range_m}$ and
$\PntSet = \brc{\pnt_1, \ldots, \pnt_n}$, the natural \LP has a
variable $x_i$ for range $\range_i$:
\begin{align}
    \min\;\;\; &\sum_{i=1}^m x_i  & \nonumber \\
    \text{subject to}\;\;\;& \sum_{i: \pnt_j \in \range_i} x_i  \geq \Demand{\pnt_j}\;\;\;
    &\forall \pnt_j \in \PntSet,
    \eqlab{l:p}\\
    &x_i  \in [0,1] & i=1,\ldots, m.\nonumber
\end{align}

Note that \LP is a relaxation of the integer program for the set
multi-cover problem, for which $x_i$ are required to take a value in
$\brc{0,1}$. If repetitions of a set are allowed, then the constraint
$x_i \in [0,1]$ is replaced by $x_i \ge 0$.

Let $\PriceF = \PriceF(\Instance)$ denote the \emphi{value} of an
optimum solution to the above \LP. Clearly, $\Opt \ge
\PriceF(\Instance)$. We will refer to the values assigned to the
variables $x_i$ for some particular optimal solution to the \LP as the
\emphi{fractional solution}.  In the following, we will refer to the
value of $x_i$ in the solution as the \emphi{weight} of the range
$\range_i$. We will sometimes use vectors that are not optimal
solutions for \LP, but only \emphi{feasible}; that is, they
satisfy the constraints.

\subsection{Overview of Rounding for Geometric Set Cover}
\label{sec:overview}
We briefly explain the previous approaches for obtaining approximation
algorithms for the set cover problem in geometric settings.  The work
of Clarkson \cite{c-apca-93} and \Bronniman and Goodrich
\cite{bg-aoscf-95} used the reweighting technique and $\eps$-nets to
obtain algorithms that provide approximation bounds with respect to
the integer optimum solution. In \cite{l-upsda-01, ers-hsvcs-05}, it
was pointed out that these results can be reinterpreted as rounding
the \LP relaxation and hence the approximation bounds can also be
stated with respect to the fractional optimum solution. Here we
discuss this interpretation.

Note that in the set cover setting $\Demand{\pnt} = 1$ for all points.
Consider a fractional solution to the \LP given by $x_i$ assigned to
ranges $\range_i \in \Family$, with total value $\PriceF = \sum_i
x_i$. Let $\eps = 1/\PriceF$.  From the constraint (\Eqref{l:p}) it
follows that for each $\pnt$, $\sum_{i: \pnt \in \range_i} x_i/\PriceF
\ge \Demand{\pnt}/\PriceF = 1/\PriceF = \eps$. Interpreting
$x_i/\PriceF$ as the weight of range $\range_i$, we obtain a set
system in which all points are covered to within a weight of
$\eps$. Therefore an $\eps$-net of the (weighted) {\em dual} range
space is a set cover for the original instance. Now one can plug known
results on the size of $\eps$-nets for set systems to immediately
derive an approximation. For example, set systems with \VC dimension
$\VCDim$ have $\eps$-nets of size $O(\VCDim/\eps \cdot \log 1/\eps)$
\cite{pa-cg-95} and hence one concludes that there is a set cover of
size $O(\VCDim^* \PriceF \log \PriceF )$ computable in polynomial
time, that is, an $O(\VCDim^* \log \PriceF)$ approximation.  For some
set systems improved bounds on the $\eps$-net size are known. For
example, if $\PntSet$ is a finite set of points and $\Family$ is a set
of disks in the plane then $\eps$-nets of size $O(1/\eps)$ are known
to exist for the dual set system and hence one obtains an $O(1)$
approximation for covering points by disks in the plane.  Clarkson and
Varadarajan \cite{cv-iaags-07} showed that bounds on the size of
$\eps$-nets can be obtained in the geometric setting from bounds on
the union complexity of objects in $\Family$. We remark that the connection
to $\eps$-nets above also holds in the converse direction: for a given
set system, the integrality gap of \LP can be used to obtain bounds on the
$\eps$-net size.

In the multi-cover setting we can take the same approach as above.
However, now we have for a point $\pnt$, $\sum_{i: \pnt \in \range_i}
x_i/\PriceF \ge \Demand{\pnt} \cdot \eps$ where $\eps = 1/\PriceF$.
Note that we now have non-uniformity due to different demands and
hence an $\eps$-net would not yield a feasible multi-cover for the
original problem.

%-----------------------------------------------------------------------

\section{Multi-cover in spaces with bounded \TPDF{\VC}{VC} dimension}

In this section, we prove the following theorem.

\begin{theorem}
    Let $\Instance=(\PntSet, \Family)$ be an instance of multi-cover
    with \VC dimension $\VCDim$ and let $\VCDim^*$ be the \VC
    dimension of the dual set system. There is a randomized poly-time
    algorithm that on input $\Instance$ outputs $O(\VCDim^* \PriceF
    \log \PriceF)$ sets of $\Family$ that together satisfy
    $\Instance$, where $\PriceF$ is the value of an optimum fractional
    solution to $\Instance$.  

    \thmlab{v:c:dim}
\end{theorem}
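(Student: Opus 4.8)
The plan is to round the \LP relaxation~\Eqref{l:p} by repeatedly extracting $\eps$-nets, organised so that the non-uniformity of the demands is tamed by halving the \emph{maximum} residual demand in each iteration. First solve~\Eqref{l:p} to obtain a fractional solution $\vec x$ of value $\PriceF$. The key preliminary observation is that for every $\pnt\in\PntSet$ we have $\Demand{\pnt}\le\sum_{i:\pnt\in\range_i}x_i\le\sum_i x_i=\PriceF$, so the largest demand is at most $\PriceF$; this is what will keep the number of iterations down to $O(\log\PriceF)$.

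The algorithm proceeds in rounds. Entering a round we have a residual instance $\InstanceB=(\PntSetB,\FamilyB)$, a feasible fractional solution $\vec x$ for it of value $W\le\PriceF$, and we set $D$ to be the current maximum residual demand; note $D\le W$ by the above bound applied to $\InstanceB$. The round selects a subcollection $N\subseteq\FamilyB$ of size $O(\VCDim^*(W/D)(\log(W/D)+D))$ with the property that every point of residual demand at least $D/2$ is covered by at least $\ceil{D/2}$ ranges of $N$; this is the key lemma, discussed below. Adding $N$ to the solution drops the residual demand of every point whose residual demand lay in $[D/2,D]$ to at most $D/2$, and leaves every other residual demand unchanged, so the maximum residual demand at least halves. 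Moreover, for \emph{any} choice of $N$, the restriction of $\vec x$ to $\FamilyB\setminus N$ is feasible for the new residual instance: since $x_i\le 1$, the ranges of $N$ passing through a point $\pnt$ carry total weight at most $\Cover{\pnt}{N}$, which is exactly the amount by which the demand of $\pnt$ has just dropped; and this restricted vector still has value at most $W\le\PriceF$. Hence we may pass to the next round with parameters no worse than before. Since the maximum residual demand starts strictly below $\PriceF$ and at least halves each round, after $O(\log\PriceF)$ rounds it is $0$ and the algorithm stops.

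As distinct rounds draw their ranges from disjoint subcollections of $\Family$, the union of all the sets $N$ consists of distinct ranges and is a legal multi-cover of $\Instance$. For its size, let $D=2^\ell$ index the successive rounds, where $2^\ell$ ranges from at most $\PriceF$ down to $1$. Using $W\le\PriceF$ and monotonicity of $w\mapsto (w/D)\log(w/D)$ for $w\ge D$, the $\ell$-th round contributes $O(\VCDim^*((\PriceF/2^\ell)\log(\PriceF/2^\ell)+\PriceF))$ ranges; the additive $O(\VCDim^*\PriceF)$ term, summed over the $O(\log\PriceF)$ rounds, gives $O(\VCDim^*\PriceF\log\PriceF)$, and $\sum_{\ell\ge0}(\PriceF/2^\ell)\log(\PriceF/2^\ell)=O(\PriceF\log\PriceF)$ because $\sum_{\ell\ge0}(\log\PriceF-\ell)2^{-\ell}=O(\log\PriceF)$. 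Altogether the output has $O(\VCDim^*\PriceF\log\PriceF)$ ranges, as claimed.

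It remains to establish the key lemma, which is the crux. In a round with parameters $W$ and $D$, reweight $\FamilyB$ by $x_i/W$, so the total weight is one and every point of residual demand at least $D/2$ has weighted depth at least $\eps:=(D/2)/W$; the task is to cover every such point at least $t:=\ceil{D/2}$ times using \emph{distinct} ranges. We do this by including each $\range_i\in\FamilyB$ independently with probability $\min(1,c\,x_i/(W\eps))$ for a suitable constant $c$ --- Bernoulli sampling, rather than sampling with replacement, is precisely what makes the chosen ranges distinct. The delicate point is that truncating these probabilities at $1$ could a priori underweight a point; but a point of residual demand at least $D/2$ lies in at least $\Demand{\pnt}\ge D/2$ distinct ranges carrying positive weight, and one checks this keeps its expected coverage $\Omega(\VCDim^*\log(1/\eps)+t)$. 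A Chernoff bound then makes the event ``$\pnt$ is covered fewer than $t$ times'' have probability $\eps^{\Theta(\VCDim^*)}$, and a standard $\eps$-net (double sampling) union bound over the roughly $(1/\eps)^{\VCDim^*}$ combinatorially distinct point-neighbourhoods of the \emph{dual} range space (which has \VC dimension $\VCDim^*$) shows that with constant probability every heavy point is covered at least $t$ times, the sample having size $O((\VCDim^*/\eps)(\log(1/\eps)+t))=O(\VCDim^*(W/D)(\log(W/D)+D))$; re-sampling on failure yields the randomized polynomial-time guarantee. This lemma --- a \emph{single} sample that simultaneously hits every $\eps$-heavy point with multiplicity $t$, with size depending on $\log(1/\eps)$ rather than $\log n$ and with all sets distinct --- is where the real difficulty lies. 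Beyond it one only needs to dispose of the final round $D=1$ (where the requirement degenerates to an ordinary $\eps$-net) and to note that solving~\Eqref{l:p} and drawing the samples run in polynomial time for the given representation of $(\PntSet,\Family)$.
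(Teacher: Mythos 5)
Your outer loop (halving the maximum residual demand, keeping the restricted LP solution feasible, and summing the per-round sizes) is sound, and if your ``key lemma'' were available the accounting would indeed give $O(\VCDim^*\PriceF\log\PriceF)$. But that lemma is exactly where the difficulty of the theorem sits, and your sketch of it does not hold up. Concretely: (i) with a genuinely constant $c$, the expected coverage of a point of weighted depth $\eps$ under probabilities $\min(1,c\,x_i/(W\eps))$ is only $\Theta(c)$, far below the required $t=\ceil{D/2}$; you implicitly need $c=\Theta\pth{\VCDim^*\log(1/\eps)+t}$, which is what your size bound presupposes. (ii) Even then, in the truncated regime your per-point failure bound $\eps^{\Theta(\VCDim^*)}$ is false: if a point of residual demand $d\approx D$ lies in $k\approx t-1$ ranges sampled with probability $1$ and the remaining LP mass through it is $O(1)$, the probability of missing the one additional range is only $\exp\pth{-\Omega(c/D)}=\exp\pth{-\Omega(\VCDim^*)}$ when $D\gg\log(1/\eps)$, not $\exp\pth{-\Omega(\VCDim^*\log(1/\eps))}$. (This particular hole is patchable by first extracting all ranges with $x_i$ above a threshold of order $1/\VCDim^*$, as the paper does with its threshold $1/4$, but you do not do this.) (iii) Most importantly, ``union bound over the roughly $(1/\eps)^{\VCDim^*}$ combinatorially distinct point-neighbourhoods'' is not a valid step as stated: the number of distinct dual ranges is governed by the shatter function on all of $\Family$ (polynomial in $\cardin{\Family}$), and reducing to a $\mathrm{poly}(1/\eps,t)$ count requires the full double-sampling/symmetrization argument for non-uniform Bernoulli sampling, with a $t$-fold coverage requirement and with some probabilities truncated at $1$. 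That is precisely a weighted, without-repetition analogue of the relative-approximation bounds of Li--Long--Srinivasan, and the paper explicitly remarks that carrying out such a direct analysis ``appears to be nontrivial'' and deliberately avoids it. So as written, the crux of your proof is asserted rather than proven.

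For contrast, the paper proves \thmref{v:c:dim} by a reduction rather than by a direct sampling analysis: after discarding ranges with $x_i\geq 1/4$, each constraint $\sum_{i:\pnt\in\range_i}x_i\geq\Demand{\pnt}$ is split into at least $\Demand{\pnt}$ blocks of consecutive indices each carrying LP weight in $[1/2,3/4)$; each block becomes a copy $(\pnt,I)$ of $\pnt$, where $I$ is an interval of range indices, and $\FC{\range_i}$ contains $(\pnt,I)$ iff $\pnt\in\range_i$ and $i\in I$. A set cover of the derived instance is a multi-cover of the original one, the derived LP value is at most $2\PriceF$, and---since the derived system is an intersection of the original system with an interval system---its primal and dual \VC dimensions remain $O(\VCDim)$ and $O(\VCDim^*)$; the known $\eps$-net-based integrality gap for set cover then finishes the proof. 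If you want to keep your iterative-halving architecture, you must either prove your key lemma in full (essentially redoing the relative-approximation proof for weighted Bernoulli sampling without repetition), or replace it by a reduction of the above kind within each round.
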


We have an easy proof of the above theorem for the setting in which a
set is allowed to be used multiple times; the proof is based on
results on relative approximations.  See \secref{with:reps} for
details.

It may be possible to adapt this proof to prove the above theorem for
the setting in which a set is not allowed to be included more than
once.  This, however, appears to be nontrivial and instead we next give a
proof, in \secref{no:repetition}, that uses the \LP to reduce the
problem to a regular set cover problem with a modified set system
whose primal and dual \VC dimensions are at most $O(\VCDim)$ and
$O\pth{\VCDim^*}$, respectively.

\subsection{Multi-cover without repetition}
\seclab{no:repetition}

\paragraph{Geometric intuition.}
Imagine we have a set of points and a set of disks
$\Family=\brc{\range_1, \ldots, \range_m}$ (i.e., the ranges) in the
plane. We solve the \LP for this system. This results in weight
assigned to each disk, such that the total weight of the disks
covering a point $\pnt \in \PntSet$ exceeds (or meets) its demand
$\Demand{\pnt}$. We add another dimension (we are now in three
dimensions), and for each $i=1,\ldots,m$ translate the disk $\range_i
\in \Family$ to the plane $z=i$. Let $\Family'$ denote the resulting
set of $m$ two-dimensional disks that ``live'' in three
dimensions. Observe that the projection of $\Family'$ to the $xy$
plane is $\Family$. Every point $\pnt_j \in \PntSet$ is now a vertical
line $\ell_j$ (parallel to the $z$-axis), and we are asking for a
subset $X$ of $\Family'$, such that every line $\ell_j$ stabs at least
$\Demand{\pnt_j}$ disks of $X$. The fractional solution for the
original problem induces a fractional solution to the new problem.
The next step, is to break every line $\ell_j$ into segments, such
that the total weight of the disks of $\Family'$ intersecting a
vertical segment is at least $1$ (and at most $2$).  Let $L'$ be this
resulting set of segments. Consider the ``set system'' $\RangeSpace =
(L', \Family')$, and its associated set cover instance of the disks of
$\Family'$ so that they intersect all the segments of $L'$. It is easy
to verify that any solution of this set cover problem, is in fact a
solution to the original multi-cover problem, and vice versa (up to
small constant multiplicative error, say 2). We know how to solve such
set-cover problems using standard tools. The key observation is that
the projection of $(L', \Family')$ on to the plane yields the original
range space. Similarly, projecting $(L', \Family')$ on to the $z$-axis
results in a range space where the points are on the real line and the
ranges are intervals. Since the range space $(L', \Family')$ is the
intersection of two range spaces of low \VC dimension, it has low \VC
dimension. This implies that the set-cover problem on $(L', \Family')$
has a good approximation \cite{bg-aoscf-95} and this leads to a good
approximation to the original multi-cover problem on $\RangeSpace$.

\paragraph{More formal solution.}
Consider a fractional solution $x$ to the \LP associated with
$\Instance$. If any set $\range_i \in \Family$ satisfies $x_i \ge 1/4$
then we add $\range_i$ to our solution.  There can be at most $4\sum_i
x_i = 4\PriceF$ such sets, so including them does not harm our goal of
a solution with $O(\PriceF)$ sets.  We now work with the residual
instance and hence we can assume that the fractional solution has no
set $\range_i$ with $x_i \ge 1/4$.

Now, assume that we have fixed the numbering of the ranges of $\Family =
\brc{ \range_1, \ldots, \range_m }$, and consider the fractional
solution, with the value $x_i$ associated with $\range_i$, see
\Eqref{l:p}. In particular, for a point $\pnt \in \PntSet$, consider
the linear inequality
\[
\sum_{i: \pnt \in \range_i} x_i \geq \Demand{\pnt}.
\]
This inequality holds for the fractional solution. We split this
inequality into $O(\Demand{\pnt})$ inequalities having $1/2$ on the
right hand side. To this end, scan this inequality from left to right,
and collect enough terms on the left-hand side, such that their sum
(in the fractional solution) is larger than $1/2$. We will write down
the resulting inequality, and continue in this fashion until all the
terms of this inequality are exhausted.

Formally, let $U_0 = U = \brc{i \sep{ \pnt \in \range_i}}$ be the
sequence of indices of the ranges participating in the above
summation, where $U$ and $U_0$ are sorted in increasing order.  For
$\ell \geq 1$, let $V_\ell$ be the shortest prefix of $U_{\ell-1}$ such that
$\sum_{i \in V_\ell} x_i \geq 1/2$, and let $u_\ell$ be the largest
number (i.e., index) in $V_\ell$, and let $U_\ell = \pth{U_{\ell-1}
  \setminus V_\ell}$. Since each $x_i < 1/4$ we have that $\sum_{i \in
  V_\ell} x_i < 1/2+1/4 < 3/4$. We stop when $\sum_{i \in U_\ell} x_i
< 1/2$ for the first time. This process creates some $h$
inequalities of the form
\[
\sum_{i \in V_\ell} x_i \geq 1/2,
\]
for $\ell = 1, \ldots, h$. We have $h\ge \Demand{\pnt}$ inequalities
from the fact that $\sum_{i: \pnt \in \range_i} x_i \ge \Demand{\pnt}$
and by our observation that $\sum_{i \in V_\ell} x_i < 3/4$.

\newcommand{\FA}[1]{\widetilde{#1}}
\newcommand{\FB}[1]{\overline{#1}}
\newcommand{\FC}[1]{\widehat{#1}}

We next describe a new set system $(\PntSet', \FC{\Family})$,
derived from this construction of inequalities, such that
a set cover solution to the new system implies a multi-cover
solution to the original system, and the new system
has small \VC dimension.

The new set system $(\PntSet', \FC{\Family})$ is defined as
follows. For each point $\pnt$ which was processed as above, we create $h$
copies of it, one for each $V_\ell$. Each such copy of $\pnt$ corresponds to
an interval $I = [\alpha,\beta]$, where $\alpha$ is $\min_{i \in
   V_\ell} i$, and $\beta$ is $\max_{i \in V_\ell} i$.  So $\pnt$ has
$h$ such intervals associated with it, say $I_1, \ldots, I_h$. We
generate $h$ new pairs from $\pnt$, namely, $Q(\pnt) = \brc{
   \pth[]{\pnt, I_1}, \ldots, \pth[]{\pnt, I_h}}$.

We set $\PntSet' = \cup_{\pnt} Q(\pnt)$, and $\FC{\Family} =
\brc{\FC{\range_i} \sep{ \range_i \in \Family}}$, where
\begin{equation}
    \FC{\range_i} = \brc{ \pth{\pnt, I} \in \PntSet' \sep{  \pnt \in
          \range_i \text{ and }  {i} \in I}}.
    \eqlab{r:i:definition}
\end{equation}
Note that $\cardin{\FC{\range_i}} = \cardin{\range_i}$, and it can be
interpreted as deciding, for each point $\pnt \in \range_i$, which one
of its copies should be included in $\FC{\range_i}$.

The following two claims follow easily from the construction.
\begin{claim}
    For the set cover instance defined by $(\PntSet', \FC{\Family})$
    there is a fractional solution of value $2 \sum_i x_i \le
    2\PriceF$.
    
    \clmlab{fractional:value}
\end{claim}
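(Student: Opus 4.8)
The claim asserts that the set-cover instance $(\PntSet', \FC{\Family})$ admits a fractional solution of value $2\PriceF$. The natural candidate is to reuse the fractional solution $x$ we already have for the multi-cover \LP of $\Instance$, but scaled up by a factor of $2$: assign weight $2x_i$ to $\FC{\range_i}$ for each $i$. (If some $2x_i > 1$ this is fine, since in the set-cover \LP we may either allow $x_i \ge 0$, or — having already pulled out all sets with $x_i \ge 1/4$ in the earlier step — we have $2x_i < 1/2 < 1$, so the box constraint is respected too.) The total value is then $\sum_i 2x_i = 2\PriceF$, which is the bound claimed, so the only thing to verify is feasibility: every element $(\pnt, I_\ell) \in \PntSet'$ must be covered to weight at least $1$.

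First I would fix an element $(\pnt, I_\ell) \in \PntSet'$, where $I_\ell = [\alpha_\ell, \beta_\ell]$ is the interval built from the prefix $V_\ell$ in the splitting construction. By the definition in \Eqref{r:i:definition}, a set $\FC{\range_i}$ covers $(\pnt, I_\ell)$ exactly when $\pnt \in \range_i$ and $i \in I_\ell$. So the total weight covering $(\pnt, I_\ell)$ in the scaled solution is
\[
\sum_{\substack{i:\ \pnt \in \range_i\\ i \in I_\ell}} 2x_i.
\]
Now the key observation is that $V_\ell$ is, by construction, a set of indices $i$ with $\pnt \in \range_i$, all lying in $[\alpha_\ell,\beta_\ell] = I_\ell$, and satisfying $\sum_{i \in V_\ell} x_i \ge 1/2$. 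Hence the sum above is at least $2 \sum_{i \in V_\ell} x_i \ge 2 \cdot (1/2) = 1$. This establishes feasibility, and the claim follows.

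**Main obstacle.** There is essentially no hard step here; the content is entirely bookkeeping about the splitting construction. The one point that needs a moment's care is making sure the interval $I_\ell$ genuinely contains every index of $V_\ell$ — this is immediate from the definition of $I_\ell$ as $[\min_{i\in V_\ell} i,\ \max_{i\in V_\ell} i]$ — and that distinct prefixes $V_1,\dots,V_h$ for the same point really do produce the $h$ distinct copies in $Q(\pnt)$, so that the coverage requirement for each copy is handled by "its own" prefix. A secondary subtlety worth a sentence is which feasibility notion ($x_i \in [0,1]$ versus $x_i \ge 0$) is being used for the new instance; invoking the fact that the residual instance has all $x_i < 1/4$ lets us stay inside the box, so the claim holds under either reading. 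No nontrivial combinatorial or geometric input (VC dimension, union complexity, $\eps$-nets) is needed for this particular claim — those enter only later when the set-cover instance $(\PntSet', \FC{\Family})$ is actually solved.
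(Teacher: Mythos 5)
Your proof is correct and matches the paper's intent: the paper omits the argument as immediate from the construction, and the scaled solution $2x_i$ together with the observation that every index of $V_\ell$ lies in $I_\ell$ and $\sum_{i\in V_\ell} x_i \ge 1/2$ is exactly the intended justification. Your side remarks (the box constraint being satisfied since $x_i < 1/4$, and each copy $(\pnt,I_\ell)$ being served by its own prefix $V_\ell$) are accurate and complete the bookkeeping.
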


\begin{claim}
    An integral solution of value $\beta$ to the set cover instance
    $(\PntSet', \FC{\Family})$ implies a multi-cover to the original
    instance of cardinality at most $\beta$.
    
    \clmlab{integer:feasibility}
\end{claim}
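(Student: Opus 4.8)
The plan is to extract from an integral set cover $X\subseteq\FC{\Family}$ of the derived instance a subfamily of $\Family$ of size at most $\cardin{X}=\beta$ that already meets every demand. Concretely, for each set in $X$ fix an index $i$ with that set equal to $\FC{\range_i}$, and let $S\subseteq\brc{1,\ldots,m}$ be the resulting collection of indices, so $\cardin{S}\le\cardin{X}=\beta$; I will show that $\brc{\range_i \sep{i\in S}}$ is a feasible multi-cover of $\Instance$.

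The argument rests on a single structural observation about the splitting step. Fix $\pnt\in\PntSet$, let $U=\brc{i \sep{\pnt\in\range_i}}$ be sorted increasingly, and recall that each $V_{\ell+1}$ is the shortest prefix of $U_\ell = U_{\ell-1}\setminus V_\ell$; since $V_\ell$ is itself a prefix of $U_{\ell-1}$, every index of $V_\ell$ is strictly smaller than every index of $V_{\ell+1}$. Therefore the intervals $I_1,\ldots,I_h$ attached to the $h$ copies of $\pnt$ satisfy $\max I_\ell < \min I_{\ell+1}$, and in particular are pairwise disjoint, so no index lies in two of them. Now $\PntSet'$ contains the copies $\pth{\pnt,I_1},\ldots,\pth{\pnt,I_h}$ with $h\ge\Demand{\pnt}$ (established in the construction). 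Since $X$ covers $\PntSet'$, for each $\ell$ there is an index $i(\ell)\in S$ with $\pth{\pnt,I_\ell}\in\FC{\range_{i(\ell)}}$, which by \Eqref{r:i:definition} forces $\pnt\in\range_{i(\ell)}$ and $i(\ell)\in I_\ell$. By the disjointness of the $I_\ell$ the indices $i(1),\ldots,i(h)$ are pairwise distinct, so $\brc{\range_i \sep{i\in S}}$ contains at least $h\ge\Demand{\pnt}$ distinct ranges containing $\pnt$. As $\pnt$ was arbitrary, $\brc{\range_i \sep{i\in S}}$ is a feasible multi-cover, of cardinality at most $\beta$. (If one wants the statement for the full instance, before the heavy sets with $x_i\ge 1/4$ were removed, one simply adds those $\le 4\PriceF$ sets back, which does not affect the asymptotics.)

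I do not anticipate any genuine obstacle: everything follows once the blocks $V_\ell$ are seen to occupy disjoint, increasing ranges of indices — a consequence of the ``shortest prefix'' rule in the splitting — and the rest is bookkeeping. The one point that must be handled with a little care is that distinct indices $i$ may yield the same set $\FC{\range_i}$, which is why a representative index is chosen per element of $X$ rather than assuming that $i\mapsto\FC{\range_i}$ is injective.
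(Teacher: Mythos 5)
Your proposal is correct and matches the intended argument: the paper itself gives no proof (it states that the claim ``follows easily from the construction''), and the content it is relying on is exactly what you make explicit, namely that the blocks $V_\ell$ occupy disjoint, increasing index ranges, so that covering the $h \ge \Demand{\pnt}$ copies $(\pnt,I_\ell)$ forces $h$ distinct indices $i(\ell)\in I_\ell$ with $\pnt\in\range_{i(\ell)}$. Your extra care about picking one representative index per element of the cover, and about adding back the heavy sets with $x_i\ge 1/4$, is consistent with how the paper accounts for those sets separately.
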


We need the following easy lemma on the dimension of intersection of
two range spaces with bounded \VC dimension.
\begin{lemma}[\cite{h-gaa-08}]
    Let $\RangeSpace =(\GroundSet, \RangeSet)$ and $\RangeSpaceA =
    (\GroundSet, \RangeSetA)$ be two range spaces of \VC{}-dimension
    $\VCDim$ and $\VCDimA$, respectively, where $\VCDim,\VCDimA >
    1$. Let $\widehat{\RangeSet} = \brc{ \range \cap \rangeA \sep{
          \range \in \RangeSet, \rangeA \in \RangeSetA}}$.  Then, for
    the range space $\widehat{\RangeSpace} = (\GroundSet,
    \widehat{\RangeSet})$, we have that $\VCDim( \widehat{\RangeSpace}
    ) = O( \VCDim + \VCDimA)$.

    \lemlab{easy}
\end{lemma}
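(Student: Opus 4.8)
I would prove \lemref{easy} by bounding the \emph{primal shatter function} of $\widehat{\RangeSpace}$ and then running the Sauer--Shelah lemma in reverse. Recall that in a range space of \VC dimension $d$, any set of $n\ge d$ ground elements is cut by the ranges into at most $\Phi_d(n) := \sum_{i=0}^{d}\binom{n}{i}\le (en/d)^d$ distinct subsets (valid for $1\le d\le n$). So first I would fix an arbitrary finite $A\subseteq\GroundSet$ with $\cardin{A}=n$ and observe the key factorization: for every $\widehat{\range}=\range\cap\rangeA\in\widehat{\RangeSet}$ we have $\widehat{\range}\cap A=(\range\cap A)\cap(\rangeA\cap A)$. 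Hence a trace of $\widehat{\RangeSet}$ on $A$ is determined by a trace of $\RangeSet$ on $A$ together with a trace of $\RangeSetA$ on $A$, so the number of distinct traces is at most $\Phi_{\VCDim}(n)\cdot\Phi_{\VCDimA}(n)$.

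Next, suppose $\widehat{\RangeSpace}$ shatters a set of size $n$. If $n<\max(\VCDim,\VCDimA)$ we are already done, since that is $O(\VCDim+\VCDimA)$; so assume $n\ge d:=\max(\VCDim,\VCDimA)$. Shattering forces $2^n$ distinct traces, hence, using $\Phi_{\VCDim}(n),\Phi_{\VCDimA}(n)\le\Phi_d(n)$ and the Sauer bound (legitimate since $n\ge d\ge 2$),
\[
2^n \;\le\; \Phi_{\VCDim}(n)\,\Phi_{\VCDimA}(n) \;\le\; \Phi_d(n)^2 \;\le\; \pth{\frac{e n}{d}}^{2d}.
\]
Taking base-$2$ logarithms and writing $m=n/d\ge 1$, this becomes $m\le 2\log_2(e m)=2\log_2 m+2\log_2 e$. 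The function $m\mapsto m-2\log_2 m$ is increasing once $m$ exceeds a small absolute constant, and one checks it already exceeds $2\log_2 e$ by the time $m$ reaches $10$; therefore $m<10$, i.e.\ $n<10\,d\le 10(\VCDim+\VCDimA)$. Consequently no set of size $10(\VCDim+\VCDimA)$ is shattered by $\widehat{\RangeSpace}$, giving $\VCDim(\widehat{\RangeSpace})=O(\VCDim+\VCDimA)$.

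The argument is short, and essentially the only point requiring care is the final step: one must extract a \emph{linear} bound on $n$, not the $O(d\log d)$ bound one would get from a careless estimate. The trick is to retain the $d$ in the denominator of $e n/d$ rather than bounding $\Phi_d(n)$ crudely by $(e n)^d$; passing to $d=\max(\VCDim,\VCDimA)$ keeps the computation clean. The hypothesis $\VCDim,\VCDimA>1$ is exactly what is needed for the form $(e n/d)^d$ of the Sauer bound and for the transcendental inequality $m\le 2\log_2 m+O(1)$ to admit only bounded solutions.
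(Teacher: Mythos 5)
Your proof is correct, and it is the standard shatter-function argument (bound the trace count of $\widehat{\RangeSet}$ on an $n$-set by the product $\Phi_{\VCDim}(n)\Phi_{\VCDimA}(n)$, then invert Sauer--Shelah); the paper itself gives no proof for \lemref{easy}, citing \cite{h-gaa-08}, where essentially this same argument appears. The numerical bookkeeping checks out ($m-2\log_2 m$ is increasing for $m>2/\ln 2$ and exceeds $2\log_2 e$ at $m=10$), so nothing further is needed.
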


\begin{observation}
    If $\RangeSpace = (\GroundSet, \RangeSet)$ has \VC dimension
    $\delta$, and $\X \subseteq \RangeSet$, then the \VC dimension of 
    $(\GroundSet, \X)$ is bounded by $\delta$.

    \obslab{monotone}
\end{observation}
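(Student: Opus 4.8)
The plan is to appeal directly to the definition of shattering, observing that it is monotone in the range family. First I would recall that a subset $Y \subseteq \GroundSet$ is shattered by $\RangeSet$ if for every $W \subseteq Y$ there exists a range $\range \in \RangeSet$ with $\range \cap Y = W$, and that the \VC dimension of $(\GroundSet, \RangeSet)$ is the supremum of the sizes of the shattered subsets of $\GroundSet$.

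Next I would argue the inclusion on shattered sets. Suppose $Y \subseteq \GroundSet$ is shattered by $\X$. Then for each $W \subseteq Y$ there is a range $\range \in \X$ with $\range \cap Y = W$; since $\X \subseteq \RangeSet$, this same $\range$ lies in $\RangeSet$, and it witnesses $\range \cap Y = W$ there as well. Hence $Y$ is also shattered by $\RangeSet$. In other words, every subset of $\GroundSet$ that is shattered by $\X$ is shattered by $\RangeSet$. Since $(\GroundSet, \RangeSet)$ has \VC dimension $\delta$, no subset of $\GroundSet$ of size greater than $\delta$ is shattered by $\RangeSet$, and a fortiori no such subset is shattered by $\X$. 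Therefore the \VC dimension of $(\GroundSet, \X)$ is at most $\delta$, which is the claim.

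There is essentially no obstacle here: the statement expresses the trivial monotonicity fact that shrinking the range family can only reduce the collection of shatterable subsets, and the \VC dimension depends only on that collection. The only point requiring a (very small) amount of care is the direction of the quantifier over ranges — a witness in the smaller family $\X$ is automatically a witness in the larger family $\RangeSet$, which is precisely what makes the argument go through.
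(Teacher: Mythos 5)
Your proof is correct and is exactly the standard monotonicity argument: a witness range in $\X$ is also a witness in $\RangeSet$, so every set shattered by $\X$ is shattered by $\RangeSet$. The paper states this as an observation without proof precisely because the argument is this immediate, so there is nothing further to compare.
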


The crucial lemma is the following.
\begin{lemma}
    The \VC dimension of the set system $(\PntSet', \FC{\Family})$ is
    $O(\VCDim )$ and the \VC dimension of its dual set system is 
    $O(\VCDim^*)$.
    
    \lemlab{intersect}
\end{lemma}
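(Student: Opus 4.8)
The plan is to realize $(\PntSet', \FC{\Family})$, and likewise its dual set system, as a subfamily of the family of all pairwise intersections of two range spaces of bounded \VC dimension, and then to invoke \lemref{easy} together with \obsref{monotone}. This is the formal counterpart of the ``project onto the plane'' / ``project onto the $z$-axis'' picture from the geometric intuition above.

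\textbf{The primal bound.} Recall from \Eqref{r:i:definition} that $\FC{\range_i} = \brc{(\pnt,I) \in \PntSet' : \pnt \in \range_i \text{ and } i \in I}$. For each index $i$, set $\range_i^\circ = \brc{(\pnt,I) \in \PntSet' : \pnt \in \range_i}$ (the preimage of $\range_i$ under the projection $(\pnt, I) \mapsto \pnt$) and $K_i = \brc{(\pnt, I)\in\PntSet' : i \in I}$, so that $\FC{\range_i} = \range_i^\circ \cap K_i$; thus $\FC{\Family}$ is contained in the family of all pairwise intersections of $\brc{\range_i^\circ}_i$ and $\brc{K_i}_i$ on the ground set $\PntSet'$. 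First I would check that $\brc{\range_i^\circ}_i$ has \VC dimension at most $\VCDim$: membership of $(\pnt,I)$ in $\range_i^\circ$ depends only on $\pnt$, so any subset of $\PntSet'$ shattered by $\brc{\range_i^\circ}_i$ contains at most one pair for each $\pnt \in \PntSet$; projecting such a shattered set onto first coordinates is then injective and yields a subset of $\PntSet$ shattered by $\Family$, hence of size at most $\VCDim$. Next, writing $I = [\alpha,\beta]$ we have $K_i = \brc{(\pnt,[\alpha,\beta])\in\PntSet' : \alpha \le i} \cap \brc{(\pnt,[\alpha,\beta])\in\PntSet' : i \le \beta}$, an intersection of two ``one-sided threshold'' ranges; each of these two families (over all $i$) has \VC dimension at most $1$, so \lemref{easy} (harmlessly inflating the bounds to $2$) gives $\VCDim\pth{\PntSet', \brc{K_i}_i} = O(1)$. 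A final application of \lemref{easy}, to $\brc{\range_i^\circ}_i$ and $\brc{K_i}_i$, bounds the \VC dimension of their pairwise-intersection family by $O(\VCDim)$, and \obsref{monotone} then gives $\VCDim\pth{\PntSet', \FC{\Family}} = O(\VCDim)$.

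\textbf{The dual bound.} In the dual system the ground set is $\FC{\Family}$, which we identify with the index set $\brc{1,\dots,m}$ via $\FC{\range_i} \leftrightarrow i$; the range associated with $(\pnt, I) \in \PntSet'$ is $\brc{\FC{\range_i} : \pnt \in \range_i \text{ and } i \in I} = \brc{\FC{\range_i} : \pnt\in\range_i} \cap \brc{\FC{\range_i} : i \in I}$. Under this identification, the family $\brc{ \brc{\FC{\range_i} : \pnt \in \range_i} : \pnt \in \PntSet}$ is, up to the bijection of ground sets, exactly the dual set system $S^*$ of $(\PntSet, \Family)$, and hence has \VC dimension $\VCDim^*$; and $\brc{ \brc{\FC{\range_i} : i \in I} : I}$ is a subfamily of the range space on $\brc{1,\dots,m}$ whose ranges are the discrete intervals, so it has \VC dimension at most $2$. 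Thus the dual range family is again contained in the pairwise-intersection family of two range spaces of \VC dimension $\VCDim^*$ and $O(1)$, and \lemref{easy} with \obsref{monotone} gives the bound $O(\VCDim^*)$. (In either argument, collapses forced by repeated copies of a point, or by distinct points or indices inducing the same range, can only decrease the relevant \VC dimensions, so they do not affect the upper bounds.)

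The one genuinely delicate step is the claim that pulling ranges back along the projection $(\pnt, I)\mapsto\pnt$ does not increase the \VC dimension --- concretely, the observation that a shattered subset of $\PntSet'$ can contain at most one pair per point of $\PntSet$. The remaining work is bookkeeping: checking the identities $\FC{\range_i} = \range_i^\circ \cap K_i$ and its dual analogue, and noting that the ``diagonal'' family $\brc{\range_i^\circ \cap K_i}_i$ is a subfamily of the full pairwise-intersection family to which \lemref{easy} and \obsref{monotone} apply.
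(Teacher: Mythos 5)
Your proof is correct and follows essentially the same route as the paper: write $\FC{\range_i}$ as the intersection of the ``point-projection'' range $\FA{\range_i}$ (your $\range_i^\circ$) with the ``index-in-interval'' range $\FB{\range_i}$ (your $K_i$), bound each factor's \VC dimension (duplication of points for the first, an interval-type system of constant dimension for the second), and apply \lemref{easy} together with \obsref{monotone}, repeating the argument verbatim in the dual. The only difference is presentational: you justify the constant bound for the interval factor via two nested threshold families and spell out the duplication argument, where the paper simply cites points-versus-intervals and the fact that only points are duplicated.
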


\begin{proof}
    We define two set systems $\pth[]{\PntSet', \FA{\Family}}$ and
    $\pth[]{ \PntSet', \FB{\Family}}$ as follows.  $\FA{\Family} =
    \brc{\FA{\range_i} \sep{ \range_i \in \Family }}$ where
    \[
    \FA{\range_i} = \brc{ \pth{\pnt, I} \in \PntSet' \sep{ \pnt
          \in \range_i}},
    \]
    and $\FB{\Family} = \brc{\FB{\range_i} \sep{ \range_i \in \Family
       }}$, where
    %\[
    $\displaystyle \FB{\range_i} = \brc{(\pnt, I) \in \PntSet' \sep{ i
          \in I}}$.
%    \]

    Note that $\FC{\range_i} = \FA{\range_i} \cap \FB{\range_i}$ (see
    \Eqref{r:i:definition}). Therefore $(\PntSet', \FC{\Family})$ is
    formed by the intersection of ranges $(\PntSet', \FA{\Family})$
    with ranges of $(\PntSet', \FB{\Family})$.  Therefore the \VC
    dimension of $(\PntSet', \FC{\Family})$ is bounded by $O\pth{
       \FA{\VCDim} + \FB{\VCDim} }$ where $\FA{\VCDim}$ and
    $\FB{\VCDim}$ are the \VC dimensions of $(\PntSet', \FA{\Family})$
    and $(\PntSet', \FB{\Family})$ respectively, by \lemref{easy} and
    \obsref{monotone}. We observe that the set system
    $(\PntSet',\FA{\Family})$ has the same \VC dimension as that of
    $(\PntSet,\Family)$ since we only duplicate points.  The set
    system $(\PntSet',\FB{\Family})$ has constant \VC dimension
    $\FB{\VCDim} = 3$ since it is the intersection system of points on
    the line with intervals.

%    To make our presentation compatible with the intuitive explanation
%given above, we will slightly abuse notations and sometime refer to
%the range $\range_i$ of $\Family$ as $(\range_i,i) \in \Family$. This
%is just a syntactic sugar and has no other significance.  

    The second part of the claim follows by a similar argument.
    Consider the dual range spaces of $\pth[]{\PntSet',
       \FA{\Family}}$, $\pth[]{\PntSet', \FB{\Family}}$, and
    $\pth[]{\PntSet', \FA{\Family}}$, respectively. The ground set of
    these range spaces can be made to be $\Family$. We have the
    following:
    \begin{itemize}
        \item $\FA{\Instance^*} = \pth[]{\Family, \FA{\X}} $, the range space
        dual to $\pth[]{\PntSet', \FA{\Family}}$, has for any point
        $(\pnt,I) \in \PntSet'$ a range that contains all the 
        $\range_i \in \Family$ that contains $\pnt$. It is therefore
        just the dual range space to $\Instance=(\PntSet, \Family)$,
        and it has \VC dimension $\delta^*$. 

        \item $\FB{\Instance^*} = \pth[]{\Family, \FB{\X}} $, the
        range space dual to $\pth[]{\PntSet', \FB{\Family}}$, for
        every $(\pnt,I) \in \PntSet'$, has the range containing all
        the sets $\range_i$ such that $i \in I$. As such,
        $\FB{\Instance^*}$ has a constant \VC dimension.

        \item $\FC{\Instance^*} = \pth[]{\Family, \FC{\X}} $, the
        range space dual to $\pth[]{\PntSet', \FC{\Family}}$, for
        every $(\pnt,I) \in \PntSet'$, has the range containing all
        the sets $\range_i$ such that $i \in I$ and $\pnt \in
        \range_i$.
    \end{itemize}
    We have that $\FC{\Instance^*}$ is the range space contained in the
    intersection of range spaces $\FA{\Instance^*}$ and
    $\FB{\Instance^*}$. \lemref{easy} and \obsref{monotone}
    imply that the \VC dimension of $\FC{\Instance^*}$ is $O(
    \delta^*)$.
\end{proof}

\bigskip

Now we apply the known results on the integrality gap of the \LP for
set cover as discussed in Section~\ref{sec:overview}.  These results
imply that for the set system
$(\PntSet',\FC{\Family})$ there is an integral set cover of value
$O(\VCDim^* \PriceF \log \PriceF)$ (here we use
\clmref{fractional:value} and \lemref{intersect}).  From
\clmref{integer:feasibility}, there is a multi-cover for the original
instance of the desired size. This completes the proof of the
theorem. We observe that the algorithm is in fact quite simple. After
solving the \LP, pick each range $\range_i$ independently with
probability $\min\{1, c x_i\}$ where $c = \alpha \cdot \VCDim^* \log
\PriceF$ for a sufficiently large constant $\alpha$. With constant
probability this yields a multi-cover.

%---------------------------------------------------------
\subsection{Multi-cover in spaces with bounded \TPDF{\VC}{VC} dimension
   when allowing repetition}
\seclab{with:reps}

We consider the case where sets in $\Family$ are allowed to be picked
multiple times to cover a point. For this purpose we use relative
approximations. The standard definition of relative approximation is
the dual of what we give below.

\medskip

Let $\alpha, \phi > 0$ be two constants. For a set system
$\Instance = (\PntSet, \Family)$, recall from \defref{depth} that
$\Cover{\pnt}{\Family}$ denotes
the number of sets in $\Family$ that contain the point $\pnt$.  A
\emphi{relative $(\alpha, \phi)$-approximation} is a subset $X
\subseteq \Family$ that satisfies
\begin{equation}
    (1-\alpha) \frac{\Cover{\pnt}{\Family}}{\cardin{\Family}} %
    \leq %
    \frac{\Cover{\pnt}{X}}{\cardin{X}} %
    \leq%
    (1+\alpha) \frac{\Cover{\pnt}{\Family}}{\cardin{\Family}}.
    \eqlab{rel-app}
\end{equation}
for each $\pnt \in \PntSet$ with $\Cover{\pnt}{\Family} \ge \phi \cdot
\cardin{\Family}$.  It is known \cite{lls-ibscl-01} that there exist
subsets with this property of size ${\displaystyle \frac{c\delta
   }{\alpha^2\phi}\log\frac{1}{\phi}}$, where $c$ is an absolute
constant, and $\delta$ is the \VC dimension of the dual set system
of $(\PntSet, \Family)$. Indeed, any random sample of that many sets from
$\Family$ is a relative $(\alpha, \phi)$-approximation with constant
probability. To guarantee success with probability at least $1-q$, one
needs to sample $\displaystyle \frac{c}{\alpha^2\phi}\pth{\delta
   \log\frac{1 }{ \phi} + \log\frac{1}{q}}$ elements of $X$, for a
sufficiently large constant $c$ \cite{lls-ibscl-01}.

\bigskip

To apply relative approximation for our purposes we let $N$ be a large
integer such that $Nx_i$ is an integer for each range $r_i$ (since the
$x_i$ are rational such an $N$ exists). We create a new set system
$(\PntSet, \Family')$ where $\Family'$ is obtained from $\Family$ by
duplicating each range $r_i \in \Family$ $Nx_i$ times. Thus
$\cardin{\Family'} = N f$. From the feasibility of $x$ for the \LP we
have that $\Cover{\pnt}{\Family'} \ge N \Demand{\pnt} \ge Nf
\Demand{\pnt}/f$ for each $\pnt \in \PntSet$.

Now we apply the relative approximation result to $(\PntSet,
\Family')$ with $\phi = 1/\PriceF$ and $\alpha = 1/2$ to obtain a set
$X \subset \Family'$ such that $\cardin{X} = \Theta(\VCDim^* \PriceF
\log \PriceF)$ and with the property that for each $\pnt \in \PntSet$,
\[
\frac{\Cover{\pnt}{\Family'}}{2 \cardin{\Family'}} \le
\frac{\Cover{\pnt}{X}}{\cardin{X}}.
\]
We have
\[
\Cover{\pnt}{X}%
\geq%
\frac{\cardin{X}}{2} \cdot \frac{\Cover{\pnt}{\Family'}}{
   \cardin{\Family'}}%
\geq%
\frac{\cardin{X}}{2} \cdot \frac{N \Demand{\pnt} }{N \PriceF}
= \Demand{\pnt} \cdot \Omega\pth{  \VCDim^* \log \PriceF }
\geq 
\Demand{\pnt},
\]
as desired.

%Therefore $\Cover{\pnt}{X} \ge \Demand{\pnt}$ as desired.

Note that $X$ is picked from $\Family'$ which has duplicate copies of
sets from $\Family$.  Recall that the algorithm, from the previous
section (which is for the variant without repetition), picks each
range $\range_i$ independently with probability $\min\{1,c x_i \cdot
\VCDim^* \log \PriceF\}$; and this yields a feasible multi-cover
\emph{without} repetitions.  It may be possible to analyze this
algorithm (i.e., without repetitions) directly by a careful
walkthrough of the proof for relative approximations.

%------------------------------------------

%--------------------------------------------------------------------
%--------------------------------------------------------------------

\section{Multi-cover for Halfspaces in 3d and Generalizations}

In this section, we show that improved approximations can be obtained
for specific classes of set systems induced by geometric shapes of low
complexity. In particular, we describe an $O(1)$ approximation for the
multi-cover problem when the points are in $\Re^3$, and the ranges are
induced by halfspaces. The main idea, of using cuttings, extends also to
other nice shapes. We outline the extensions and some applications in
\secref{generalization}.

\subsection{Total demand, Sampling, and Residual demand}

We develop some basic ingredients that are useful in randomly
rounding the \LP solution. These ingredients apply to a generic
multi-cover instance, not necessarily a geometric one, however we use
the notation of points and ranges for continuity.

\begin{lemma}
    Given a multi-cover instance $\Instance = (\PntSet,
    \Family)$, one can compute a cover for $\Instance$ of size
    no more than the total demand $\DemandExt{\Instance}{\PntSet}$.
    
    \lemlab{energy:small}
\end{lemma}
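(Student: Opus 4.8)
The plan is to prove this by induction on the total demand $\DemandExt{\Instance}{\PntSet}$, peeling off one point of demand at a time while using a single range to kill as much demand as possible. The base case is trivial: if the total demand is $0$, the empty cover works. For the inductive step, pick any point $\pnt \in \PntSet$ with $\Demand{\pnt} \ge 1$ (if there is none, the instance is already satisfied by the empty set). Since $\PntSet$ must be coverable at all — which we may assume, else the statement is vacuous or we interpret ``cover'' as covering to the extent possible — there is some range $\range \in \Family$ with $\pnt \in \range$. Add $\range$ to the solution; this decreases the residual demand of every point in $\range$ (in particular of $\pnt$) by one, so the total residual demand drops by at least $1$, i.e. $\REnergy{\PntSet, \{\range\}} \le \DemandExt{\Instance}{\PntSet} - 1$. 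Recurse on the residual instance $\Instance' = (\PntSet,\Family)\setminus\{\range\}$, whose total demand is strictly smaller; by induction it has a cover $C'$ of size at most $\DemandExt{\Instance'}{\PntSet} \le \DemandExt{\Instance}{\PntSet}-1$. Then $C' \cup \{\range\}$ is a cover for $\Instance$ of size at most $\DemandExt{\Instance}{\PntSet}$.

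The one subtlety is the ``no repetition'' constraint: the recursive call works with $\FamilyB = \Family\setminus\{\range\}$, so $\range$ is never picked twice, and since each recursive step removes a distinct range from the family, all ranges chosen across the recursion are distinct. One should also check that a feasible cover exists at all at each stage; this holds because we only ever reduce demands, so any cover of the original instance (restricted appropriately) remains available — more simply, one can just assume at the outset that $\Instance$ admits \emph{some} cover, which is implicit in the problem statement. A cleaner way to phrase the whole argument without induction: repeatedly, while some point has positive residual demand, pick an arbitrary range containing such a point and add it to the solution; each iteration reduces the total residual demand by at least one, so the process terminates after at most $\DemandExt{\Instance}{\PntSet}$ iterations, and the ranges picked are pairwise distinct because we work in the shrinking residual family.

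I do not expect any real obstacle here — the lemma is essentially the observation that the trivial greedy-by-point procedure uses one set per unit of demand in the worst case. The only thing to be careful about is bookkeeping with the residual-instance notation from \defref{residual} so that distinctness of the chosen ranges is transparent, and noting that the algorithm runs in polynomial time provided $\DemandExt{\Instance}{\PntSet}$ is polynomially bounded (or, more generally, that it uses a polynomial number of oracle calls to find, for a given uncovered point, a range containing it).
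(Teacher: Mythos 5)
Your argument is correct and is essentially the paper's own proof: the paper simply scans the unsatisfied points and, for each point $\pnt$, adds $\Demand{\pnt}$ arbitrary ranges containing it, charging one picked range per unit of demand. Your one-range-at-a-time iteration with the total residual demand as a potential (and the shrinking residual family guaranteeing distinctness) is just a more explicit bookkeeping of the same trivial greedy, so no further changes are needed.
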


\begin{proof}
    Indeed, scan the unsatisfied points of $\PntSet$ one by one, and
    for each such point $\pnt$, add to the solution $\Demand{\pnt}$ ranges
    that cover it, picked arbitrarily.
    Clearly, the ranges that are picked satisfy all the demands, and
    the number of ranges picked is at most $\sum_{\pnt}\Demand{\pnt} =
    \DemandExt{\Instance}{\PntSet}$.
\end{proof}

\bigskip

Given an instance of multi-cover $\Instance = (\PntSet, \Family)$ and
a feasible fractional solution $x$, a \emphi{$cx$-sample} for a scalar
$c$ is a random sample of $\Family$, formed by independently picking
each of the ranges $\range_i \in \Family$ with probability $\min\{1, c
x_i\}$, where $x_i$ is the value assigned to $\range_i$ by the
fractional solution. (For the $i$ with $cx_i \geq 1$, so that $i$ is
chosen with probability one, we will simply assume that such choices
have been made, and the demand removed; that is, we assume that
hereafter that $x_i \leq 1/c$.  Since the number of such $i$ is at
most $c\,\PriceF$, this step does not affect our goal of obtaining an
output cover with $O(\PriceF)$ sets.)

\begin{lemma}
    Let $c\geq 4$ be a constant and let $\Instance = (\PntSet,
    \Family)$ be a multi-cover instance with an \LP solution
    satisfying $x_i\le 1/c$ for all $i$. If $\RSample$ is a
    $cx$-sample and $\pnt \in \PntSet$ is a point with demand
    $\DemandChar = \Demand{\pnt}$, then
    \[
    \Prob{ \VBig{0.5cm}\pnt \text{ is not fully covered by } \RSample
    } = \Prob{ \VBig{0.5cm} \Cover{\pnt}{\RSample} < \DemandChar} \leq
    \exp \pth{ - \frac{c }{4} \DemandChar },
    \]
    and
%    \[
    $\displaystyle \Ex{\VBig{0.5cm} \REnergy{\pnt, \RSample} } \leq
    \exp \pth{-\frac{c}{4} \DemandChar}$.
%    \]
    
    \lemlab{demand}
\end{lemma}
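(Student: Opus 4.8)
The plan is to analyze the coverage of a fixed point $\pnt$ with demand $\DemandChar$ under the $cx$-sample, using a standard Chernoff-type argument. First I would set up the relevant random variables: for each range $\range_i$ with $\pnt \in \range_i$, let $Y_i$ be the indicator that $\range_i$ is picked into $\RSample$, so $\Prob{Y_i = 1} = \min\{1, cx_i\} = cx_i$ (recall we have reduced to the case $x_i \le 1/c$). Then $\Cover{\pnt}{\RSample} = \sum_{i: \pnt \in \range_i} Y_i$, a sum of independent Bernoulli variables with mean $\mu = \Ex{\Cover{\pnt}{\RSample}} = c \sum_{i: \pnt \in \range_i} x_i \ge c\,\DemandChar$, where the inequality is exactly the \LP feasibility constraint \Eqref{l:p} for $\pnt$.

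For the first bound, I would apply the lower-tail Chernoff inequality: for a sum of independent $\{0,1\}$ variables with mean $\mu$, $\Prob{\text{sum} \le (1-\lambda)\mu} \le \exp(-\lambda^2 \mu / 2)$. We want $\Prob{\Cover{\pnt}{\RSample} < \DemandChar}$. Since $\mu \ge c\,\DemandChar$ and $c \ge 4$, we have $\DemandChar \le \mu/c \le \mu/4$, so $\DemandChar = (1-\lambda)\mu$ with $\lambda \ge 3/4$; choosing the deviation parameter to land at $\DemandChar$ gives $\Prob{\Cover{\pnt}{\RSample} < \DemandChar} \le \exp(-\lambda^2\mu/2) \le \exp(-(9/16)(c\,\DemandChar)/2) \le \exp(-c\,\DemandChar/4)$, using $\mu \ge c\,\DemandChar$ and $9/32 \ge 1/4$. (One can also use the cleaner form $\Prob{\text{sum} \le \mu/4} \le e^{-\mu/4 \cdot (\ln 4 - 1 + 1/4)}$ or simply the bound $\Prob{\text{sum} < \DemandChar} \le e^{-\mu/8} \le e^{-c\DemandChar/8}$ and absorb constants; the point is only that $\exp(-c\DemandChar/4)$ is comfortably achievable.)

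For the second bound, on $\Ex{\REnergy{\pnt, \RSample}}$, recall $\REnergy{\pnt, \RSample} = \max(\DemandChar - \Cover{\pnt}{\RSample}, 0)$, which is a nonnegative integer-valued random variable. I would write its expectation as a tail sum: $\Ex{\REnergy{\pnt, \RSample}} = \sum_{k \ge 1} \Prob{\REnergy{\pnt, \RSample} \ge k} = \sum_{k=1}^{\DemandChar} \Prob{\Cover{\pnt}{\RSample} \le \DemandChar - k}$. Each term $\Prob{\Cover{\pnt}{\RSample} \le \DemandChar - k}$ is bounded, by the same Chernoff estimate applied with target value $\DemandChar - k$ in place of $\DemandChar$, by $\exp(-c(\DemandChar-k)/4)$ roughly, or more crudely by the event of deviating from $\mu \ge c\DemandChar$ down to $\DemandChar-k$. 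Summing the resulting geometric-type series in $k$ gives a total of $O(\exp(-c\DemandChar/4))$, and one checks the constant works out (the series $\sum_{j\ge 0} e^{-cj/4}$ converges to $1/(1-e^{-c/4}) \le 1/(1-e^{-1}) < 2$, so the geometric tail only costs a small constant factor, which can be absorbed by slightly adjusting the exponent or by starting from a marginally stronger single-point bound such as $e^{-c\DemandChar/3}$).

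The main obstacle is purely bookkeeping with constants: I need the single-point failure probability to decay fast enough in the gap $\DemandChar - k$ that the sum over $k$ still fits under $\exp(-c\DemandChar/4)$, which forces me either to prove the sharper one-shot bound $\Prob{\Cover{\pnt}{\RSample} \le t} \le \exp(-c(\DemandChar - t)/4 - \Omega(c\DemandChar))$ or, more simply, to first establish a one-shot bound with a better constant in the exponent (e.g. $\exp(-c\DemandChar/3)$) and then degrade to $\exp(-c\DemandChar/4)$ after summing. No conceptual difficulty arises beyond choosing the Chernoff variant and tracking these constants carefully; the independence of the $Y_i$ and the \LP constraint $\sum_{i:\pnt\in\range_i} x_i \ge \DemandChar$ are the only structural inputs.
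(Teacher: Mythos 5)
Your setup and your first bound coincide with the paper's proof: indicator variables for the sampled ranges, mean $\mu = c\sum_{i:\pnt\in\range_i} x_i \ge c\,\DemandChar$ from \LP feasibility together with $x_i\le 1/c$, then a lower-tail Chernoff bound; and your tail-sum identity $\Ex{\REnergy{\pnt,\RSample}} = \sum_{k\ge 1}\Prob{\Cover{\pnt}{\RSample}\le \DemandChar-k}$ is also how the paper proceeds. The gap is in how you bound and sum those tail terms. The per-term estimate you propose, $\Prob{\Cover{\pnt}{\RSample}\le \DemandChar-k}\le \exp\pth{-c(\DemandChar-k)/4}$, summed over $k=1,\dots,\DemandChar$, does \emph{not} give $O\pth{\exp\pth{-c\DemandChar/4}}$: that series is dominated by the $k=\DemandChar$ term, which equals $1$, so this route only yields $\Ex{\REnergy{\pnt,\RSample}}=O(1)$. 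What is needed is decay in the gap $j=\DemandChar-t$ \emph{on top of} a uniform $\exp\pth{-\Omega(c\DemandChar)}$ factor --- exactly the first alternative you name in your last paragraph but leave unproven. It does hold, and by essentially the same single Chernoff computation as your first bound, because the mean stays at least $c\DemandChar$ no matter what target you aim for: since $\DemandChar-j \le \mu/c - j = \mu\pth{1-\frac{c-1}{c}-\frac{j}{\mu}}$, Chernoff gives $\Prob{\Cover{\pnt}{\RSample}\le\DemandChar-j} \le \exp\pth{-\frac{\mu}{2}\pth{\frac{c-1}{c}+\frac{j}{\mu}}^2} \le \exp\pth{-\frac{c}{4}\DemandChar-\frac{3}{4}j}$ for $c\ge 4$, and then $\sum_{j\ge1}e^{-3j/4} = 1/(e^{3/4}-1)<1$, so the sum fits under $\exp\pth{-c\DemandChar/4}$ with no slack to spare. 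This is precisely the paper's argument (which also yields your first claim by taking $j=1$).

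Your ``simpler'' fallback --- a one-shot bound of $\exp\pth{-c\DemandChar/3}$ multiplied by the $\DemandChar$ terms --- also does not quite close: it requires $\ln\DemandChar\le c\DemandChar/12$, which fails at $c=4$ for $\DemandChar\in\{2,3,4\}$. It can be repaired by using the stronger multiplicative Chernoff form $\Prob{\Cover{\pnt}{\RSample}\le \DemandChar}\le \exp\pth{-\mu+\DemandChar+\DemandChar\ln(\mu/\DemandChar)}\le\exp\pth{-(c-1-\ln c)\,\DemandChar}$, whose exponent beats $c\DemandChar/4$ by enough to absorb the extra factor of $\DemandChar$ for all $c\ge 4$; but the cleanest fix is the $j$-decaying bound above, i.e., the paper's route.
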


\begin{proof}
    Let $X_i$ be the indicator variable which is equal to one if the
    $cx$-sample includes the range $\range_i \in \Family$, and is zero
    otherwise. Let $Y = \Cover{\pnt}{\RSample} = \sum_{i: \pnt \in
       \range_i} X_i$; observe that $\mu = \Ex{ Y } \geq c\,
    \DemandChar$ using the facts that $x$ is a feasible solution to
    \LP, and $x_i \le 1/c$ for all $i$. For $j \in [0, \DemandChar]$,
    we apply the Chernoff inequality \cite{mr-ra-95} and use the fact
    that $c \geq 4$ to obtain:
    \begin{eqnarray*}
        \Prob{\VBig{0.5cm} \Cover{\pnt}{\RSample} \leq \DemandChar - j}
        &\leq& \Prob{ \VBig{0.5cm}Y <
           \mu \pth{1 - (c-1)/c - j/\mu} } \leq \exp \pth{ - \frac{\mu}{2} \pth{
              \frac{c-1}{c} + \frac{j}{\mu} }^2 }\\
        &\leq& \exp \pth{ - \frac{\mu}{4} - \frac{3}{4}j} \leq
        \exp \pth{ - \frac{c }{4} \DemandChar - \frac{3}{4}j}. 
    \end{eqnarray*}
    The first statement of the lemma follows by substituting $j=1$
    and observing that the desired bound follows,
    and the second follows by using the fact that, for a random
    variable $Z$ taking non-negative integral values, that $\Ex{Z} =
    \sum_{k > 0} \Prob{Z \geq k}$.  This implies
    \begin{eqnarray*}
        \Ex{\VBig{0.5cm} \REnergy{\pnt, \RSample} }
        &=&  \sum_{1\le j\le \DemandChar} \Prob{\VBig{0.5cm} 
           \Cover{\pnt}{\RSample} \leq \DemandChar - j}
        \leq%
        \sum_{1\le j\le \DemandChar} \exp \pth{ - \frac{c }{4} 
           \DemandChar - \frac{3}{4}j}
        \\ &=&%
        \exp \pth{ - \frac{c }{4}\DemandChar } 
        \sum_{1\le j\le \DemandChar} \exp \pth{- \frac{3}{4}j}
        \leq%
        \exp \pth{ - \frac{c }{4}\DemandChar } \frac{1}{\exp \pth{3/4} -
           1}
        \leq         \exp \pth{ - \frac{c }{4}\DemandChar },
    \end{eqnarray*}
    as claimed.
\end{proof}

\bigskip

In the following, for $t \geq 1$, let
\[
\PntSet_t = \brc{\pnt \in \PntSet \sep{ \; t \leq \Demand{\pnt} < 2t \;}}.
\]

The lemma below implies that if the number of points in the set system
is ``small'' then the multi-cover problem can almost be solved in one
round of sampling.

\begin{lemma}
    Suppose there is a probability distribution on a collection of
    multi-cover instances such that an instance $\Instance = (\PntSet,
    \Family)$ chosen from the distribution satisfies, for any $t \geq
    1$, that
    \[
    \Ex{ \VBig{0.5cm} \cardin{\PntSet_t}\;} \leq V \cdot K^t,
    \]
    where $K$ and $V$ are fixed parameters of the distribution.  Then
    there is a value $c$ depending on $K$, such that for any feasible
    fractional solution $x$ to $\Instance$, a $cx$-sample $\RSample$
    results in expected total residual demand $\REnergy{\PntSet,
      \RSample} \leq V$; here the expectation is with respect to the
    randomness of $\Instance$ and the independent randomness of the
    $cx$-sample.
    
    \lemlab{small:set}
\end{lemma}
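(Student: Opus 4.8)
The plan is to bound the expected total residual demand by splitting $\PntSet$ according to the dyadic demand classes $\PntSet_t$ and applying the pointwise residual-demand estimate from \lemref{demand} to each class. First I would fix the constant $c$: by \lemref{demand}, a $cx$-sample satisfies $\Ex{\REnergy{\pnt,\RSample}} \le \exp(-\tfrac{c}{4}\Demand{\pnt})$ for every point $\pnt$, provided $c \ge 4$ and $x_i \le 1/c$ (the latter is free, since ranges with $cx_i \ge 1$ are taken deterministically, costing only $O(\PriceF)$ sets). For a point $\pnt \in \PntSet_t$ we have $\Demand{\pnt} \ge t$, so $\Ex{\REnergy{\pnt,\RSample}} \le \exp(-\tfrac{c}{4}t)$. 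Summing over $\pnt$ and over dyadic levels $t = 2^0, 2^1, 2^2, \ldots$ (so that the $\PntSet_t$ partition $\PntSet$, using only integer powers of two), and taking expectation over the randomness of $\Instance$ as well,
\[
\Ex{\REnergy{\PntSet,\RSample}} \;=\; \sum_{t} \Ex{\; \sum_{\pnt \in \PntSet_t} \REnergy{\pnt,\RSample}\;} \;\le\; \sum_{t} \Ex{\cardin{\PntSet_t}} \cdot \exp\pth{-\tfrac{c}{4}t} \;\le\; V \sum_{t} K^t \exp\pth{-\tfrac{c}{4}t},
\]
where in the middle step I would condition on $\Instance$, apply \lemref{demand} to each of the $\cardin{\PntSet_t}$ points, then take the outer expectation and invoke the hypothesis $\Ex{\cardin{\PntSet_t}} \le V K^t$.

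It remains to choose $c$ so that the geometric-type series $\sum_{t \in \{1,2,4,8,\ldots\}} K^t e^{-ct/4}$ is at most $1$. Writing $t$ over powers of two, each term is $\pth{K e^{-c/4}}^{t}$, and if $c$ is large enough that $K e^{-c/4} \le 1/2$ — that is, $c \ge 4\ln(2K)$ (and also $c \ge 4$) — then the sum is bounded by $\sum_{t \ge 1} 2^{-t} \le 1$. Hence $\Ex{\REnergy{\PntSet,\RSample}} \le V$, as claimed. One should double-check the edge case where $K \le 1$: then $c = 4$ already suffices, and the argument goes through unchanged since $K e^{-1} < 1/2$.

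The only subtlety — and the step I would be most careful about — is the exchange of expectations and the bookkeeping of which randomness is which: \lemref{demand} is a statement conditional on a \emph{fixed} instance $\Instance$ and a fixed feasible $x$, with probability taken over the $cx$-sample only, whereas here $\Instance$ is itself random. The clean way is to condition on $\Instance$ first, apply linearity of expectation and \lemref{demand} over the (now deterministic) point set, obtaining $\Ex{\REnergy{\PntSet,\RSample} \mid \Instance} \le \sum_t \cardin{\PntSet_t}\, e^{-ct/4}$, and only then take the expectation over $\Instance$ and use the hypothesis. The normalization step that puts $x_i \le 1/c$ (deterministically taking the heavy ranges) must be folded into the definition of the $cx$-sample as in the paragraph preceding \lemref{demand}, so it does not interfere with the residual-demand accounting. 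No other obstacle arises; the calculation is otherwise a routine geometric-series bound.
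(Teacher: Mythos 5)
Your proposal is correct and follows essentially the same route as the paper's proof: reduce to the case $x_i \le 1/c$ by taking the heavy ranges deterministically, condition on $\Instance$, apply \lemref{demand} within each dyadic class $\PntSet_t$, invoke the hypothesis $\Ex{\cardin{\PntSet_t}} \leq V K^t$, and pick $c = \Theta(\log K)$ (the paper uses $c \geq 4 + 4\log K$) so that the series over $t \in \{1,2,4,\ldots\}$ sums to at most $1$. The only cosmetic difference is that you bound each term by $2^{-t}$ while the paper bounds it by $e^{-t}$; both yield the claimed bound of $V$.
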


\begin{proof}
    Let $\RSample$ be a $cx$-sample of $\Family$ for fixed $c\geq
    4+4\log K$. Let $X$ be the subset of $\Family$ with all ranges
    having $x_i \geq 1/c$.  Since $\RSample \setminus X$ is also a
    $cx$-sample of $\Instance\setminus X$, we assume hereafter that
    $X$ is empty; the result for general $X$ follows by application of
    the result to $\Instance\setminus X$.
  
    By applying \lemref{demand} to the induced range space $\pth[]{
       \PntSet_t, \Family}$, we have
    \begin{align*}
        \Exs{\Instance,\RSample}{\VBig{0.5cm} \REnergy{\PntSet_t, \RSample} }
        &\leq \Exs{\Instance}{\sum_{\pnt\in\PntSet_t} 
           \Exs{\RSample}{\REnergy{\pnt, \RSample}} }
        \leq \Exs{\Instance}{\cardin{\PntSet_t} \exp\pth{-\frac{c}{4}
              t}}
        = \Exs{\Instance}{\cardin{\PntSet_t} \MakeBig\!}
        \exp\pth{-\frac{c}{4} t}
        \\
        &\leq   V K^t \exp\pth{-\frac{c}{4} t}
        \leq V \exp\pth{-t (c/4 - \log K)}
        \leq   V \exp\pth{-t } .
    \end{align*}
    Then, by linearity of expectation, we have
    \begin{eqnarray*}
        \Ex{\VBig{0.5cm} \REnergy{\PntSet, \RSample} }
        =  \sum_{i=0}^{\infty} 
            \Ex{\VBig{0.5cm} \REnergy{\PntSet_{2^i}, \RSample} }
        \leq \sum_{i=0}^\infty V \exp\pth{-2^i}
        \leq V.
    \end{eqnarray*}
    Thus, after $cx$-sampling, the residual instance has total expected
    demand bounded by $V$, as claimed. 
\end{proof}

% \medskip

\subsection{Clustering the given instance}

The key observation to solve the multi-cover problem in our settings
is \lemref{small:set}, as it provides a sufficient condition for an
$O(1)$ approximation.  Of course, it might not be true (even in low
dimensional geometric settings) that the number of points (i.e., the
total residual demand) is small enough, as required to apply this
lemma. We preprocess the given instance via an initial sampling step
and then employ a clustering scheme that partitions the points into
regions; we argue that these regions and an induced multi-cover
instance on them satisfies the conditions of the lemma.

The \emphi{depth} of a simplex $\Simplex$ in a set of weighted
halfspaces is the minimum depth of any point inside $\Simplex$, see
\defref{depth}.

To perform the aforementioned clustering, we will use the shallow
cutting lemma of \Matousek \cite{m-rph-92}. We next state it in the
form needed for our application, which is a special case of
\thmref{shallow:cutting}.

\medskip

\begin{lemma}
    Given a set $\Family$ of weighted halfspaces in $\Re^3$, with
    total weight $W$, there is a randomized polynomial-time
    algorithm that generates a set $\Gamma$ of simplices, called
    a \emphi{$(1/4W)$-cutting}, with the following properties:
    the union of the simplices covers $\Re^3$;
    the total weight of the boundary planes of $\Family$ intersecting any
    simplex of $\Gamma$ is bounded by $1/4$; and finally, for any
    $t\geq 0$, the expected total number of simplices of depth at most
    $t$ is $O\pth{ W  t^{2} }$.  (Here the expectation
    is with respect to the randomness of the algorithm.)
    \lemlab{decomposition}
\end{lemma}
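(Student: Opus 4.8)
The plan is to reduce the weighted statement to the unweighted case and then invoke \Matousek's shallow cutting lemma \cite{m-rph-92}, applied at a logarithmic number of depth scales and stitched into a single cutting of all of $\Re^3$. First I would clear denominators: the weights are rational, so scale them to positive integers $m_i$ and (conceptually) replace each bounding plane $h_i$ by $m_i$ coincident copies, obtaining $M = \sum_i m_i$ planes, each copy carrying the same weight $W/M$. Then the total weight of boundary planes of $\Family$ crossing a simplex is at most $1/4$ exactly when the simplex is crossed by at most $M/(4W)$ of the $M$ planes, and the weighted depth of a point is $W/M$ times its unweighted depth in these planes. So it suffices to build a $(1/(4W))$-cutting of the $M$ planes --- simplices covering $\Re^3$, each crossed by at most $M/(4W)$ planes --- such that the expected number of cells of unweighted depth at most $tM/W$ is $O(W(1+t)^2)$. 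The one thing to dispatch is a halfspace of weight more than $1/4$: it has more than $M/(4W)$ copies, so no cell may cross it and the cutting must respect it; this is automatic in \Matousek's random-sampling construction, since such a plane (carrying that much weight in a weighted sample) is sampled with overwhelming probability and the canonical triangulation respects all sampled planes. One never duplicates planes explicitly; weighted random sampling keeps the running time polynomial in $\cardin{\Family}$ and $W$.

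Next I would build the cutting scale by scale. For $j = 0,1,\dots,j^\star$, where $j^\star = O(\log W)$ is chosen with $2^{j^\star}\ge W$, put $k_j = 2^j M/W$ and apply the shallow cutting lemma to get a $k_j$-shallow $(1/(4W))$-cutting $\Gamma_j$: simplices covering every point of unweighted depth at most $k_j$, each crossed by at most $M/(4W)$ planes. In $\Re^3$ the shallow cutting lemma bounds $\Ex{\cardin{\Gamma_j}}$ by $O\!\left(r\,(1+k_j r/M)^{2}\right)$ with $r = 4W$; since $k_j r/M = 2^{j+2}$, this is $O(W\,2^{2j})$. As the maximum depth is at most $M$, already $\Gamma_{j^\star}$ covers $\Re^3$. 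Writing $R_j$ for the set of points of unweighted depth at most $k_j$, so $\emptyset = R_{-1}\subseteq R_0 \subseteq \cdots \subseteq R_{j^\star} = \Re^3$, let the final cutting $\Gamma$ consist, for each $j$, of the cells of $\Gamma_j$ that meet $R_j \setminus R_{j-1}$. Since $\Gamma_j$ covers $R_j$, the level-$j$ cells of $\Gamma$ cover $R_j\setminus R_{j-1}$, so $\Gamma$ covers $\Re^3$; and every cell of $\Gamma$ inherits the crossing bound $M/(4W)$, i.e.\ crossing weight at most $1/4$.

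Bounding the shallow cells is where the $1/4$ crossing bound does the work. Inside a cell the depth varies by at most the crossing count, hence by at most $M/(4W) = k_0/4$; so a cell of $\Gamma$ taken from level $j \ge 1$ meets $R_j\setminus R_{j-1}$, contains a point of depth $> k_{j-1}$, and therefore has minimum depth $> k_{j-1} - k_0/4 \ge k_{j-1}/2 = 2^{j-2}M/W$. In weighted terms, a cell of $\Gamma$ of weighted depth at most $t$ can come only from level $0$ or from a level $j$ with $2^{j-2} < t$, i.e.\ $j < 2 + \log_2 t$. By linearity of expectation the expected number of such cells is at most $\sum_{j \le 2 + \log_2 t}\Ex{\cardin{\Gamma_j}} = \sum_{j \le 2+\log_2 t} O(W\,2^{2j}) = O(W(1+t)^2)$, which is the claimed $O(Wt^2)$ for $t \ge 1$ (and $O(W)$ for $t = 0$). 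The algorithm performs $O(\log W)$ randomized shallow-cutting constructions plus polynomial work to compute the $R_j$ and select surviving cells, so it runs in randomized polynomial time.

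The step I expect to be the main obstacle is getting a single cutting of all of $\Re^3$ that simultaneously, for every $t$, has only $O(Wt^2)$ cells of depth at most $t$: the shallow cutting lemma handles one scale at a time, and a generic full $(1/(4W))$-cutting carries no such refined guarantee. The dyadic stacking above resolves this, but it hinges on the observation that the $1/4$ crossing bound keeps depth nearly constant inside each cell, so that a cell touching the depth window $(k_{j-1},k_j]$ is itself ``deep'' and the per-scale counts telescope. A secondary point to get right is the weighted-to-unweighted reduction --- that heavy halfspaces are respected and that one avoids literally paying for the blow-up to $M$ planes --- both handled by weighted random sampling inside \Matousek's construction.
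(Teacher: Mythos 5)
Your proof is correct, but it takes a genuinely different route from the paper's. The paper does not prove \lemref{decomposition} directly: it derives it as a special case of the general shallow cutting theorem (\thmref{shallow:cutting}), which is proved from scratch in the appendix by taking a single weighted random sample of size $r=4W$, canonically decomposing its arrangement, applying a Chazelle--Friedman-style second-stage refinement to enforce the $1/4$ crossing bound, and then counting shallow cells via the Clarkson--Shor bound $\FAtMost{k}{n} = O(k^d\, \FZero(n/k))$ together with a polynomial decay lemma for cells of large excess (using the observation that a cell of excess $t$ meeting the $k$\th level has all its points at depth at most $k + t n/r$). You instead invoke \Matousek's shallow cutting lemma as a black box at $O(\log W)$ dyadic depth scales and stitch the scales together, using the same ``depth is nearly constant inside a cell'' observation to argue that a cell kept at scale $j$ has minimum depth $\Omega(2^j)$, so the per-scale sizes telescope to $O(W(1+t)^2)$. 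Your version is shorter and arguably cleaner for halfspaces, at the cost of producing an overlapping cover rather than a decomposition (the lemma only requires a cover, so this is fine) and of relying on an external shallow cutting lemma; the paper's self-contained construction is what allows \thmref{shallow:cutting} to extend to general ``well-behaved'' shapes (fat triangles, pseudo-disks, etc.), for which no off-the-shelf shallow cutting lemma is available --- your dyadic-stacking reduction would be circular there. Two small points: the clean statement of the bound is $O(W(1+t)^2)$, as you note (the paper's $O(Wt^2)$ is only meaningful for $t\ge 1$, matching how it is used); and the heavy-halfspace issue is settled not by sampling probabilities but simply by the crossing bound itself, since a plane of weight exceeding $1/4$ cannot cross any cell of a valid cutting.
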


\subsubsection{The algorithm}

Given an instance of multi-cover $\Instance = (\PntSet,\Family)$ of
points and halfspaces in $\Re^3$, our algorithm first computes the
fractional solution to the \LP induced by $\Instance$, yielding
weights $x_i$.  Next, for $\beta$ an absolute constant in $(0,1/4)$ to
be specified later, we put in the set $X$ all the ranges $r_i$ with
$x_i\geq\beta$.  Let $\pth[]{ \PntSet', \Family'} = \pth{\PntSet,
   \Family} \setminus X$. Let $\PriceF' = \sum_{\range_i \in \Family
   \setminus X} x_i$ be the total weight of the remaining ranges.

The remainder of the algorithm uses a auxiliary abstract multi-cover
instance derived using cuttings, as described next.

Using the weights $x_i$, we build a $(1/4\PriceF')$-cutting $\Gamma$
for $\Family'$. This induces an abstract multi-cover instance
$(\Gamma, \Family')$, where a simplex $\Delta\in\Gamma$ is covered by
halfspace $h\in\Family'$ only if the interior of $\Delta$ is contained
inside $h$ and it does not meet the boundary plane of $h$.  The demand
$\Demand{\Delta}$ is defined to be $\max_{\pnt\in \PntSet\cap \Delta}
\REnergy{\pnt, \Family'}$.

A feasible solution to $(\Gamma, \Family')$ is also, by construction,
a feasible solution for the original instance
$\Instance$. Furthermore, any feasible fractional solution for
$\Instance$ can be transformed into a feasible fractional solution for
$(\Gamma, \Family')$, at the cost of a constant factor.  Indeed, the
weights $x_i$ give a feasible fractional solution to
$\Instance\setminus X$, and so the depth of $\Delta$ is at least
$\Demand{\Delta} - 1/4$, where $\Delta$ ``loses'' at most weight $1/4$
of depth due to halfspaces whose boundary planes cut $\Delta$.  It
follows that if the depth is measured with respect to weights $\hat
x_i = 2x_i$, the new depth of $\Delta$ (i.e., the point with minimum
cover in $\Delta$) is at least $2\Demand{\Delta} - 1/2 >
\Demand{\Delta}$.  That is, the weights $\hat x_i$ give a feasible
fractional solution to the multi-cover instance $(\Gamma, \Family')$.
Note that since $\beta < 1/4$, the weights $\hat x_i$ satisfy $\hat
x_i < 1$, for all $i$.

The remainder of the algorithm is to apply the approach implied by
\lemref{small:set}: we find a $c\hat x$-sample $\RSample$,
with $c$ to be determined; this induces a
residual multi-cover problem $(\Gamma, \Family')\setminus\RSample$,
which we solve using the simple technique of \lemref{energy:small}.
Letting $U$ denote the resulting combined solution to $(\Gamma, \Family')$, we
return $U\cup X$ as a cover for the original multi-cover problem.

The analysis of this algorithm is the proof of the following result.

\begin{theorem}
    Let $\Instance = (\PntSet, \Family)$ be an instance of multi-cover
    formed by a set $\PntSet$ of points in $\Re^3$, and a set
    $\Family$ of halfspaces. Then, one can compute, in randomized
    polynomial time, a subset of halfspaces of $\Family$ that meets
    all the required demands, and is of expected size
    $O\pth{\PriceF}$, where $\PriceF$ is the value of an optimal
    fractional solution to \LP.

    \thmlab{main}
\end{theorem}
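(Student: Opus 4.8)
The plan is to bound the size of the output cover $U \cup X$ in three parts, corresponding to the three stages of the algorithm. First, the set $X$ of ``heavy'' ranges (those with $x_i \geq \beta$) has size at most $\PriceF/\beta = O(\PriceF)$, since $\beta$ is an absolute constant. Second, the $c\hat x$-sample $\RSample$ has expected size $\sum_i \min\{1, c\hat x_i\} \leq c\sum_i \hat x_i = 2c\PriceF' = O(\PriceF)$, again because $c$ is a constant (to be fixed by \lemref{small:set}). Third, and this is the crux, I need to show that the residual instance $(\Gamma, \Family')\setminus\RSample$ has small expected total demand, so that \lemref{energy:small} produces only $O(\PriceF)$ additional ranges.

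For the third part, I would invoke \lemref{small:set} applied to the abstract multi-cover instance $(\Gamma, \Family')$ with the feasible fractional solution $\hat x = 2x$. The hypothesis of that lemma requires a distribution over instances with $\Ex{\cardin{\Gamma_t}} \leq V \cdot K^t$ for the sets $\Gamma_t = \{\Delta \in \Gamma : t \leq \Demand{\Delta} < 2t\}$; the randomness here is that of the shallow-cutting algorithm (\lemref{decomposition}). The key point is that a simplex $\Delta$ with $\Demand{\Delta} \geq t$ contains a point $\pnt$ with $\REnergy{\pnt,\Family'} \geq t$, and since $\hat x = 2x$ is feasible the depth of $\Delta$ with respect to $\hat x$ is at least $\Demand{\Delta}$, hence at least $t$; equivalently, when we scale so that the cutting is a $(1/4\PriceF')$-cutting built for total weight $W = \PriceF'$, the simplices counted in $\Gamma_{\geq t}$ all have depth $\Omega(t)$ relative to $W$... wait — more carefully, \lemref{decomposition} bounds the number of simplices of depth \emph{at most} $t$ by $O(W t^2)$, and I want a bound on simplices of depth \emph{at least} roughly $t$ (large demand). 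So the right move is: a simplex with $\Demand{\Delta} \geq t$ has $\hat x$-depth $\geq t$, but the cutting was built for weight $W = \PriceF'$, so rescaling weights by $1/(4\PriceF')$ makes it a standard $(1/4)$-... let me instead use the stated form directly: the expected number of simplices of depth at most $t$ (in the $1/(4\PriceF')$-scaled weights, where the threshold level corresponds to demand level $\approx \PriceF' t$) is $O(\PriceF' t^2)$. Thus $\Ex{\cardin{\{\Delta : \Demand{\Delta} \leq \tau\}}} = O(\PriceF' (\tau/\PriceF')^2) = O(\tau^2/\PriceF')$, which is polynomial, not exponential, in $\tau$ — so \lemref{small:set} does not apply as-is to $\Gamma$ directly with $V = O(\PriceF)$.

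The resolution — and the step I expect to be the genuine obstacle — is that one cannot apply \lemref{small:set} to the entire cutting $\Gamma$ at once; instead one applies it to a \emph{refined} or \emph{bucketed} family of instances, or one first discards simplices of large demand by a separate argument. The polynomial bound $\Ex{\cardin{\Gamma_t}} = O(t^2/\PriceF')$ is good when $t$ is not too large, but \lemref{small:set} wants an exponential-in-$t$ bound $V K^t$; a polynomial bound is even better, so in fact $\Ex{\cardin{\Gamma_t}} \leq V K^t$ holds with, say, $K = 2$ and $V = O(1/\PriceF')$ once $t$ exceeds a constant, and for the bounded range of small $t$ one checks $\cardin{\Gamma_t} = O(\text{poly}) \leq V K^t$ holds with $V$ absorbing the constant — but then the conclusion is residual demand $\leq V = O(1/\PriceF')$, which is far too strong/wrong. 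Hence the bucketing must be by \emph{demand scale}: for each scale $t = 2^j$, consider the sub-instance of simplices with demand in $[t, 2t)$, build the cutting at the resolution appropriate to that scale so that the relevant count is $O(\PriceF' )$ rather than $O(\PriceF' t^2)$, and sum over the $O(\log \PriceF')$ scales. I would organize the proof so that \lemref{decomposition} is invoked once per scale (or its guarantee is read off once and the $t^2$ factor is killed by the exponential decay $\exp(-ct/4)$ coming from \lemref{demand} inside \lemref{small:set}), giving total expected residual demand $\sum_j O(\PriceF') \cdot 2^{2j} \exp(-c 2^j/4) = O(\PriceF')$ for $c$ a large enough constant. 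Combining the three parts yields an output cover of expected size $O(\PriceF)$, and feasibility is immediate since a solution to $(\Gamma,\Family')$ is a solution to $\Instance$ and $X$ covers the rest; this completes the proof.
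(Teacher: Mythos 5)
Your skeleton is the paper's own: bound $\cardin{X}\le \PriceF/\beta$, bound the expected sample size by $c\sum_i \hat x_i\le 2c\PriceF'$, and show the residual instance $(\Gamma,\Family')\setminus\RSample$ has expected total demand $O(\PriceF)$ so that \lemref{energy:small} finishes. The easy parts are fine, but the crux --- bounding the expected number of cells of $\Gamma$ at each demand level --- is exactly where your write-up goes astray and never recovers. First, you misread \lemref{decomposition}: the depth parameter $t$ there is in the same absolute weight units as the demands (the cutting is a $(1/4W)$-cutting, so each cell is crossed by boundary weight at most the absolute constant $1/4$), hence the expected number of cells of depth at most $t$ is $O(\PriceF'\, t^2)$, not $O(t^2/\PriceF')$; your rescaled bound, the ensuing claim that ``\lemref{small:set} does not apply as-is'' (backwards: a polynomial bound trivially implies the hypothesis, since $t^2\le 4^t$ for $t\ge 1$, giving $V=O(\PriceF')$ and $K=4$), and the ``$V=O(1/\PriceF')$'' dead end are all artifacts of that misreading. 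Second, and more substantively, the delicate point you yourself flag --- \lemref{decomposition} counts \emph{shallow} cells, while a cell with demand in $[t,2t)$ has depth at least $t-1/4$, so it is not obviously among the cells being counted --- is never resolved: your final sum simply asserts a count of $O(\PriceF')$ (or $O(\PriceF' t^2)$) per demand scale, and your proposed remedy of rebuilding the cutting at a coarser resolution for each scale is both undeveloped (feasibility of $\hat x$ on cells crossed by weight $t/4$, and how the per-scale instances combine, are not addressed) and unnecessary.

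The missing step, and the route the paper takes, is to tie cell demands to cell depths and apply \lemref{small:set} once, to the single cutting $\Gamma$. Since each cell is crossed by weight at most $1/4$, a cell's demand exceeds its depth (with respect to the weights $x_i$) by at most $1/4$; so one may inflate each cell's demand to essentially its depth. This only strengthens the instance, the weights $\hat x_i=2x_i$ remain feasible because the $\hat x$-weight of halfspaces fully containing a cell is at least twice its $x$-depth minus $1/2$, and now the cells at demand level at most $2t$ are among the cells of depth $O(t)$, whose expected number is $O(\PriceF'\,t^2)\le O(\PriceF')\,K^t$ by \lemref{decomposition}. The paper reads this count off \lemref{decomposition} in one line (identifying a cell's demand with its depth up to this additive slack) and then invokes \lemref{small:set} with $V=O(\PriceF)$ and $K$ an absolute constant; the $t^{O(1)}$ factor is absorbed by the $\exp(-ct/4)$ decay of \lemref{demand} inside that lemma, exactly as in your parenthetical alternative, with no per-scale cuttings and no extra $\log\PriceF'$ bookkeeping. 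As written, your proposal leaves this central count unproved, so it does not yet constitute a proof of \thmref{main}.
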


\begin{proof}
  We described the algorithm above, except for the values of $c$ and
  $\beta$.
    
  By \lemref{decomposition}, the expected number of simplices in the
  cutting $\Gamma$ of demand at most $t$ is $O\pth{ W t^2 }$, where $W
  = \PriceF' \le \PriceF(\Instance)$, which implies that
  \lemref{small:set} can be applied, with $V = \PriceF(\Instance)$,
  $K$ an absolute constant, and using the weights $\hat x_i$.  Since
  $\sum_i \hat x_i \le 2\PriceF(\Instance)$, the expected size of $U$
  is at most $(c+2)\PriceF(\Instance)$, using the absolute constant
  value of $c$ used in this application of \lemref{small:set}.
  Observing that $\cardin{X} \leq \PriceF\pth{\Instance}/\beta$, and
  taking $\beta = 1/2c$ to allow the $cx$-sample probabilities $c\hat
  x_i$ to be less than one, we have that the returned solution $U\cup
  X$ to $\Instance$ has expected cardinality at most
  $(3c+2)\PriceF(\Instance)$, which is $O(\PriceF(\Instance))$.
    
  The only non-trivial step in terms of verifying the running time is
  for computing the cutting and \lemref{decomposition} guarantees the
  running time.
\end{proof}

\begin{remark}
    The shallow-cutting lemma (\lemref{decomposition}) is shown via a
    random sampling argument, and our rounding algorithm is also based
    on random sampling, given the cutting as a black-box. One could do
    a direct analysis of random sampling by unfolding the proof of the
    cutting lemma. However, the indirect approach is easier to see and
    highlights the intuition behind the proof.
\end{remark}

\section{Generalizations and Applications}
\seclab{generalization}

We now examine to what extent the result derived for covering points
in $\Re^3$ by halfspaces generalizes to other shapes. 

\subsection{Well behaved shapes}

We are interested in set systems $(\PntSet, \Family)$ where $\Family$
is a set of ``well-behaved'' shapes such as disks or fat triangles.
As we remarked already, it is shown in
\cite{cv-iaags-07} that the existence of good $\eps$-nets for such
shapes can be derived from bounds on their union complexity. For
example, it is shown that if $\Family$ is a set of \emph{fat}
triangles in the plane then there is an $O(\log \log \PriceF)$
approximation for the set cover problem. For fat wedges one obtains an
$O(1)$ approximation. Here we show that union complexity bounds can be
used to derive approximation ratios for the multi-cover problem that
are similar to those derived in \cite{cv-iaags-07} for the set cover
problem. Following the scheme for halfspaces, the key tool is the
existence of shallow cuttings. To this end we describe some general
conditions for the shapes of interest and then state a shallow cutting
lemma.

Let $\Family$ be a set of $n$ shapes in $\Re^d$, such that their union
complexity for any subset of size $r$ is (at most) $\FZero(r)$, for
some function $\FZero(r) \geq r$. Similarly, let $O\pth{r^{d}}$ be the
upper bound on total complexity of an arrangement of $r$ such shapes.

Let $X$ be a subset of $\Re^d$. We assume that given a subset
$\FamilyB \subseteq \Family$, one can perform a decomposition the
faces of the arrangement $\ArrX{\FamilyB}$ that intersects $X$ into
cells of constant descriptive complexity (e.g., vertical trapezoids),
and the complexity of this decomposition is proportional to the number
of vertices of the faces of $\ArrX{\FamilyB}$ that intersects $X$.
Finally, we assume that the intersection of $d$ shapes of $\Family$
generates a constant number of vertices.

One can then derive the following version of \Matousek's shallow
cutting lemma.  We emphasize that this lemma is a straightforward (if
slightly messy) adaption of the result of \Matousek.  A proof is
sketched in \apndref{shallow:cutting}.

\begin{theorem}
    Given a set $\Family$ of ``well-behaved'' shapes in $\Re^d$ with
    total weight $n$, and parameters $r$ and $k$, one can compute a
    decomposition of space into $O(r^d)$ cells of constant descriptive
    complexity, such that total weight of boundaries of shapes of
    $\Family$ intersecting a single cell is at most $n/r$.
    Furthermore, the expected total number of cells containing points
    of depth smaller than $k$ is 
    \[
    O\pth{ \pth{\frac{r k}{n} + 1}^d \FZero\pth{\frac{n}{k}}},
    \]
    where $\FZero\pth{\ell}$ is the worst-case combinatorial complexity of
    the boundary of the union of $\ell$ shapes of $\Family$.
    
    \thmlab{shallow:cutting}
\end{theorem}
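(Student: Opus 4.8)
The plan is to transplant \Matousek's hierarchical cutting construction \cite{m-rph-92} into the abstract setting of well-behaved shapes, using only the three structural axioms stated just before the theorem: that an arrangement of $\rho$ shapes of $\Family$ has complexity $O(\rho^{d})$; that the faces of such an arrangement meeting a region $X$ admit a decomposition into constant-descriptive-complexity cells whose total complexity is proportional to the number of face vertices involved; and that $d$ shapes meet in $O(1)$ vertices. These axioms are precisely what is needed for the Clarkson--Shor random-sampling machinery to run: the ``configurations'' are the cells of the vertical decomposition of the arrangement of a subsample, each such cell is pinned down by $O(1)$ shapes, and a shape is ``in conflict'' with a cell exactly when its boundary crosses the cell. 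As a preliminary step I would reduce to the unweighted case: either sample each shape with probability proportional to its weight throughout, or (after scaling rational weights) replace a shape of integer weight $w$ by $w$ coincident copies, so that henceforth ``total weight'' equals cardinality and we have $n$ shapes.

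For the first two conclusions --- a decomposition of $\Re^{d}$ into $O(r^{d})$ cells of constant descriptive complexity, each met by boundaries of total weight at most $n/r$ --- I would invoke the standard iterated construction: take the vertical decomposition of the arrangement of a random sample of expected size $\Theta(r)$, and repeatedly refine any cell crossed by more than its target boundary weight by recursing on the shapes crossing it; the Chazelle--Friedman exponential-decay lemma bounds the number of over-crowded cells at each level, so that after $O(1)$ levels one is left with $O(r^{d})$ cells each crossed by boundaries of total weight $\le n/r$. Nothing here uses that the shapes are hyperplanes beyond the three axioms above, so the argument carries over verbatim, and the running time is polynomial because every level manipulates arrangements of $O(r)$ shapes and there are only $O(1)$ levels. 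This is also where the claimed decomposition of faces of $\ArrX{\Family}$ into constant-complexity cells is used.

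The heart of the matter is the bound on the expected number of \emph{shallow} cells, those containing a point of depth $< k$. I would run the same construction but track only the cells meeting the depth-$<k$ region $R$ of $\ArrX{\Family}$. The three key points are: (i) since depth is monotone in the shape set and $\RSample \subseteq \Family$, any cell of the vertical decomposition of a subsample $\RSample$ that meets $R$ lies at depth $< k$ throughout its interior in $\ArrX{\RSample}$, and every output cell meeting $R$ is contained in such a subsample cell; (ii) by the Clarkson--Shor bound on the complexity of the $(\le k)$-level, expressed through the union complexity, an arrangement of $\Theta(r)$ shapes has $O\pth{k^{d}\,\FZero(r/k)}$ vertices of depth below $k$, so --- by the decomposition-complexity axiom --- its vertical decomposition has $O\pth{k^{d}\,\FZero(r/k)}$ cells meeting $R$; and (iii) the exponential-decay lemma again shows that refining these cells down to crossing weight $\le n/r$ multiplies their number only by a constant factor. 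Setting the sample size to $\Theta(r)$ and combining (ii)--(iii), together with the trivial bound of $O(\FZero(n/k))$ cells valid in the regime $rk/n = O(1)$ (where the intrinsic complexity of the boundary of the shallow region swamps the resolution of the cutting), yields the asserted bound $O\pth{\pth{rk/n + 1}^{d}\FZero(n/k)}$ in expectation over the random samples.

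The main obstacle I anticipate is bookkeeping rather than a conceptual gap: one must check that the Clarkson--Shor and exponential-decay arguments genuinely apply to the configuration space of vertical-decomposition cells in this abstract setting --- that each cell is determined by a bounded number of shapes, that $\rho$ shapes generate $O(\rho^{d})$ cells, and that the complexity of the $(\le k)$-level can be charged to the union complexity $\FZero$ --- all of which follow from the stated axioms but require the careful (and, as the paper warns, slightly messy) accounting. A secondary subtlety is the passage from ``depth with respect to $\RSample$'' to ``depth with respect to $\Family$'': one wants a cell shallow for $\Family$ to be captured by the $(\le k')$-level of $\RSample$ at a mildly inflated threshold $k' = \Theta(k)$, so that the Clarkson--Shor bound in step (ii) is applied with the correct parameters; this is routine once the sampling rate is fixed, but must be handled with care so that no spurious logarithmic factor creeps into the final count.
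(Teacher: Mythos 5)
Your skeleton is the same as the paper's (one global random sample, canonical decomposition of its arrangement, local refinement of heavy cells by nets, a Clarkson--Shor/decay argument for cells meeting the shallow region, weights by replication), but the central counting step (ii) proves the wrong bound. Bounding the cells of $\ArrVD{\RSample}$ that meet the depth-$<k$ region $R$ by the size of the $(\leq k)$-level of the \emph{sample} gives $O\pth{k^d\,\FZero(r/k)}$, which coincides with the claimed $O\pth{\pth{rk/n+1}^d\FZero(n/k)}$ only when $r=\Theta(n)$; for general $r$ it is polynomially weaker (e.g.\ with near-linear $\FZero$ and $r\approx n/k$ your bound is about $k^{d-2}n$ while the theorem claims $O(n/k)$). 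The $(r/n)$-scaling in the theorem comes from counting, as the paper does, vertices of $\ArrX{\RSample}$ that are shallow with respect to the \emph{whole family}: there are $O\pth{\alpha^d\,\FZero(n/\alpha)}$ vertices of $\ArrX{\Family}$ of depth at most $\alpha$, each surviving in the sample with probability $O\pth{(r/n)^d}$, giving an expected count of $O\pth{(r\alpha/n)^d\FZero(n/\alpha)}$. Your proposed remedy for the ``depth w.r.t.\ $\RSample$ versus depth w.r.t.\ $\Family$'' issue---inflating the threshold to $k'=\Theta(k)$---has the wrong scaling: monotonicity already gives $\RSample$-depth $<k$ for free, and that is exactly the too-weak bound. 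What is actually needed is the per-vertex sampling probability combined with the observation that a cell of excess $t$ meeting $R$ has $\Family$-depth at most $\alpha(t)=k+t(n/r)$ everywhere, so the decay in $t$ must simultaneously beat the growth of the shallow count $\gamma(t)=O\pth{(r\alpha(t)/n)^d\FZero(n/\alpha(t))}$ and the $O\pth{(t\log t)^d}$ blow-up from refining an excess-$t$ cell; this coupling is the reason the paper proves a polynomial decay lemma with an arbitrarily large exponent rather than quoting exponential decay as a black box.

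A second, smaller gap: the ``trivial bound of $O(\FZero(n/k))$ cells in the regime $rk/n=O(1)$'' that you invoke to cover the remaining range of parameters is not trivial. The only bound that comes for free there is the total cell count $O(r^d)$, which can far exceed $\FZero(n/k)$ (already for linear $\FZero$ and $r=n/k$ one has $r^d=(n/k)^d\gg n/k$); the assertion that only $O(\FZero(n/k))$ cells of the cutting meet the depth-$<k$ region is precisely the Clarkson--Shor-type statement being proved, and in the paper it falls out of the same $\gamma(t)$ computation (the ``$+1$'' term). So, as written, your argument establishes the theorem only in the special case $r=\Theta(n)$---which happens to be the case used later in the paper---but not the stated bound for general $r$ and $k$.
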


Using the same scheme as that for halfspaces we can derive
approximation ratios for the multi-cover problem for shapes that have
the property that $\FZero\pth{n}$ is near-linear in $n$. An
approximation ratio of $O(\FZero\pth{ \Opt}/\Opt)$ easily follows, but
in fact, by using the oversampling idea of Aronov \etal
\cite{aes-ssena-09}, we can improve this to $O(\log (\FZero\pth{
  \Opt}/\Opt ))$. We use the shallow cutting lemma as a black box, and
hence our argument is arguably slightly simpler than then one in
\cite{aes-ssena-09} and our result can be interpreted as a
generalization.

\begin{theorem}
  Let $\Instance = (\PntSet, \Family)$ be an instance of multi-cover
  formed by a set $\PntSet$ of points in $\Re^d$, and a set $\Family$
  of ranges.  Furthermore, the union complexity of any $\ell$ such ranges
  is (at most) $\FZero(\ell)$, for some function $\FZero(\ell) \geq \ell$.
  Then, one can compute, in randomized polynomial time, a subset of
  ranges of $\Family$ that meets all the required demands, and is of
  expected size $O\pth{ \PriceF \log \frac{\FZero\pth{ \PriceF
      }}{\PriceF}}$, where $\PriceF$ is the value of an optimal
  fractional solution to \LP.

    \thmlab{main:2}
\end{theorem}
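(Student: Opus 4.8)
The plan is to mirror the proof of \thmref{main} for halfspaces, substituting \thmref{shallow:cutting} for the halfspace-specific cutting lemma (\lemref{decomposition}) and then applying the oversampling idea to trade the union-complexity overhead $\FZero\pth{\PriceF}/\PriceF$ for its logarithm. Concretely, I would first solve the \LP to get weights $x_i$, move to the solution every range with $x_i \geq \beta$ (for a parameter $\beta$ fixed later), obtaining $O\pth{\PriceF/\beta}$ sets and a residual instance $\pth{\PntSet',\Family'}$ with total residual weight $\PriceF'\le\PriceF$; doubling to $\hat x_i=2x_i$ gives a fractional solution whose excess will absorb the loss from cutting boundaries. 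Then build, via \thmref{shallow:cutting}, a cutting $\Gamma$ of $\Family'$ with $r=\Theta\pth{\PriceF'}$ so every cell is crossed by boundary weight $O(1)$; as in the halfspace argument, the induced abstract multi-cover instance $\pth{\Gamma,\Family'}$ (a cell's demand being the maximum residual demand of the points it contains) has the property that any cover of it is a cover of $\Instance$, and $\hat x$ is feasible for it. The one new ingredient is that \thmref{shallow:cutting}, with these parameters, bounds the expected number of cells of depth at most $t$ by $O\pth{t^{d}\FZero\pth{\PriceF'/t}}$, and this plays exactly the role that $O\pth{W t^{2}}$ played for halfspaces.

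From here the ``easy'' $O\pth{\FZero\pth{\Opt}/\Opt}$ bound falls out: since $t^{d}\FZero\pth{\PriceF'/t}\le \FZero\pth{\PriceF'}K^{t}$ for an absolute constant $K$ (monotonicity of $\FZero$ and $t^{d}=O(K^{t})$), \lemref{small:set} applies with $V=O\pth{\FZero\pth{\PriceF'}}$ and a \emph{constant} sampling rate, so one round of sampling followed by the trivial fix of \lemref{energy:small} yields a cover of size $O\pth{\FZero\pth{\PriceF}}$. To get the stated bound I would instead \emph{oversample}: run a $c\hat x$-sample with $c=\Theta\pth{\log\pth{\FZero\pth{\PriceF}/\PriceF}}$, and set $\beta=\Theta(1/c)$ so that the probabilities $c\hat x_i$ stay below $1$ and $|X|=O\pth{\PriceF/\beta}=O(c\PriceF)$. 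Re-running the estimate behind \lemref{demand} and \lemref{small:set} with this $c$, the expected residual demand of the abstract instance is at most $\sum_{i\ge0}O\pth{\pth{2^{i}}^{d}\FZero\pth{\PriceF'}}\exp\pth{-c2^{i}/4}=O\pth{\FZero\pth{\PriceF'}\exp(-c/4)}=O\pth{\PriceF}$, the geometric sum being dominated by its first term and the choice of $c$ making $\FZero\pth{\PriceF'}\exp(-c/4)=O(\PriceF)$. Cleaning this up with \lemref{energy:small} costs $O\pth{\PriceF}$ more sets, and the sampled set itself has expected size $O\pth{c\PriceF}$; adding $|X|$, the total is $O\pth{\PriceF\log\pth{\FZero\pth{\PriceF}/\PriceF}}$, and $\PriceF\le\Opt$. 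This is the oversampling of Aronov \etal \cite{aes-ssena-09}, here used with the cutting as a black box.

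The step I expect to be the main obstacle is not the oversampling calculation (routine once the cell count is in hand) but setting up the abstract instance so that the cell-count bound actually controls the number of cells of a given \emph{demand}. A cell's demand is at most its depth, but a fractional solution could over-cover some points, making a low-demand cell deep and hence uncounted among the $O\pth{t^{d}\FZero\pth{\PriceF'/t}}$ shallow cells. I would handle this either by arguing the fractional solution may be taken not to over-cover any point by more than a constant factor (so demand and depth agree up to constants on the cells that matter), or by peeling off the over-covered points separately — they are covered by the sample with overwhelming probability — and by invoking \thmref{shallow:cutting} at each dyadic demand scale $t=2^{i}$ with a single common sample, rather than once; this is the ``slightly messy'' adaptation referred to in the statement of \thmref{shallow:cutting}. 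Verifying that the scale-by-scale contributions still sum to $O(\PriceF)$ under the oversampled rate is the remaining thing to check carefully.
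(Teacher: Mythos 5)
Your proposal follows the paper's proof of \thmref{main:2} essentially verbatim: solve the \LP, move every range with $x_i \geq \beta$ into the solution with $\beta = \Theta(1/c)$, build a $(1/4\PriceF)$-cutting of the residual system via \thmref{shallow:cutting}, oversample at rate $c = \Theta\pth{\log \frac{\FZero\pth{\PriceF}}{\PriceF}}$, and clean up the $O(\PriceF)$ expected residual demand with \lemref{energy:small}; your dyadic summation is the same calculation the paper performs with integer $t$.

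The one point you flag --- that a cell of low demand may nevertheless be deep because the fractional solution over-covers it, so the shallow-cell count does not literally bound the number of cells of a given demand --- is a real subtlety (the paper itself writes ``demand at most $t$'' where \thmref{shallow:cutting} counts depth), but it is resolved by your second suggestion, and your first suggestion should be dropped. One cannot in general modify an optimal fractional solution so that no point is over-covered by more than a constant factor: the weight covering an over-covered point may be forced by the constraints of other points, so bounded over-coverage cannot be assumed. Instead do the accounting by depth: a cell whose weighted depth (under the $x_i$) is $s$ has demand at most $s + 1/4$, while the Chernoff argument of \lemref{demand} applies with expected coverage $\mu \geq c s$, so its probability of not being fully covered, and hence its expected residual demand up to a factor of $s+1$, is $\exp\pth{-\Omega(c s)}$. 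Grouping cells by depth and using the count $O\pth{(s+1)^d \FZero\pth{\PriceF/s}}$ from \thmref{shallow:cutting} then gives exactly the convergent sum you already wrote, with depth in place of demand; deep cells of small demand are covered with overwhelming probability and contribute negligibly, so no separate peeling step or per-scale invocation of the cutting is needed.
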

\begin{proof}
    As before, we compute the \LP relaxation, and take all the ranges
    that the value of $x_i \geq \beta$, where $\beta = \alpha/ \log
    \frac{\FZero\pth{ \PriceF }}{\PriceF}$ for some sufficiently small
    constant $\alpha$.  Next, we compute a $(1/4\PriceF)$-cutting
    $\Gamma$ of residual system $\pth[]{ \PntSet', \Family'}$. Using
    \thmref{shallow:cutting} with parameters $r= 4\PriceF$, $n =
    \PriceF$ and $k=t$, there are at most
    \[   
    O \pth{ (t+1)^{d} \;\PriceF \; 
       \frac{\FZero\pth{ \PriceF }}{\PriceF}} 
    \]
    cells, with depth at most $t$. In particular, this bounds the
    number of cells in the cutting with depth in the range $t-1$ to
    $t$.  We pick a random sample $\RSample$ of (expected) size $h =
    O\pth{ \PriceF \log \frac{\FZero\pth{ \PriceF }}{\PriceF}}$ from
    $\Family'$, by performing a $cx$-sample from $\Family'$, where $c =
    O \pth{ \log \frac{\FZero\pth{ \PriceF }}{\PriceF}}$. Arguing as
    in \lemref{demand}, the expected residual demand for a cell of
    $\Gamma$ with demand $t$ is $t \exp \pth{ -c 
       t/4}$. Therefore, the expected total residual demand in $(\Gamma,
    \Family') \setminus \RSample$ is
    \[
    O\pth{ \sum_{t=1}^\infty \exp \pth{ -\frac{c }{4} t} (t+1)^{d+1}
       \; \PriceF\; \frac{\FZero\pth{ \PriceF }}{\PriceF}} = O \pth{
       \PriceF}.
    \]
    Using \lemref{energy:small}, the residual multi-cover instance
    $(\Gamma, \Family') \setminus \RSample$ has a cover of expected
    size $O(\PriceF)$.  Thus, we have shown that the original
    multi-cover instance has a cover of expected size $O(\PriceF/\beta
    + h + \PriceF)=O\pth{ \PriceF \log \frac{\FZero\pth{ \PriceF
        }}{\PriceF}}$.
\end{proof}

\paragraph{Applications:} The above general result can be combined
with known bounds on $\FZero\pth{n}$ to give several new results. We
follow \cite{cv-iaags-07,aes-ssena-09} who gave approximation ratios
for the set cover problem using a similar general framework; we give
essentially similar bounds for the multi-cover problem. All the
instances below involve shapes in the Euclidean plane.
\begin{itemize}
    \item $O(1)$ approximation for pseudo-disks, fat triangles of
    similar size, and fat wedges.

    \item $O(\log \log \log \PriceF)$ approximation for fat triangles
    (which also implies similar bounds for fat convex polygonal shapes
    of constant description complexity).

    \item $O(\log \alpha(\PriceF))$ approximation for regions each of
    which is defined by the intersection of the non-negative $y$
    halfplane with a Jordan region such that each pair of bounding
    Jordan curves intersecting at most three times (not counting the
    intersections on the $x$ axis). Here $\alpha(n)$ is the inverse
    Ackerman function.
\end{itemize}

\subsection{Unit Cubes in 3d}
We also get a similar result for the case of axis-parallel unit cubes.

In \cite{cv-iaags-07} an $O(1)$ approximation is also shown that for
the problem of covering points by unit sized axis parallel cubes in
three dimensions. There is a technical difficulty for this case.
Although it is known from \cite{bsty-vdhdc-98} that the combinatorial
complexity of the union of $n$ cubes is $O(n)$, the same bound is not
known for the canonical decomposition of the exterior of the union as
required by our framework. The same difficulty is present in
\cite{cv-iaags-07} and they overcome this by taking advantage of the
fact that all cubes are unit sized. The basic idea is to use a grid
shifting trick to decompose the given instance into independent
instances such that each instances has cubes that contain a common
intersection point. For this special case one can show that the
canonical decomposition of the exterior of the shapes has linear
complexity. This suffices for the framework in \cite{cv-iaags-07}.
For our framework we need a cutting.

\newcommand{\rect}{\mathsf{r}}
\newcommand{\face}{f}
\newcommand{\faceA}{g}
\newcommand{\cube}{c}

\begin{lemma}
    Let $S$ be a set of $n$ axis-parallel unit cubes in three
    dimensions, all of them containing (say) the origin. Then, one can
    decompose the arrangement of $\ArrX{S}$ into a canonical
    decomposition of axis parallel boxes, such that the complexity of
    decomposing every face is proportional to the number of vertices
    on its boundary.

    \lemlab{shallow:cubes}
\end{lemma}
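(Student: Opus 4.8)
The plan is to first reduce to the case of \emph{corner boxes} by cutting along the three coordinate hyperplanes, and then to exploit the fact that an arrangement of corner boxes is monotone in every coordinate direction. Concretely, I would add the planes $x=0$, $y=0$, $z=0$ to the decomposition; they are axis-parallel and increase the total size by only a constant factor, and after this cut every cube of $S$, restricted to a fixed closed octant, becomes a box having one corner at the origin. Reflecting coordinates, it suffices to treat the positive octant $O^{+}=\brc{(x,y,z)\,:\,x,y,z\ge 0}$, where the cubes turn into boxes $B_i=[0,a_i]\times[0,b_i]\times[0,c_i]$ with $a_i,b_i,c_i\in[0,1]$, and where I want a canonical box decomposition of $\ArrX{\brc{B_i}}$ restricted to $O^{+}$.

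The key structural observation is that every cell of this restricted arrangement is monotone with respect to each of the three axes: if a vertical segment has both of its endpoints in a cell $C$, then the coverage set (the indices $i$ with the point inside $B_i$) is squeezed between the coverage sets of the two endpoints and hence is constant along the segment, so the whole segment lies in $C$. Two consequences follow. First, the union $\bigcup_i B_i$ is downward closed inside $O^{+}$, so its boundary is a monotone orthogonal staircase surface, no bounded holes appear, and the complement of the union inside $O^{+}$ is a single staircase-bounded cell. Second, every $2$-face is either an axis-parallel rectangle or a monotone staircase region inside an axis-parallel plane, and every $3$-cell projects onto a rectilinear base in one of the coordinate planes over which it is a ``generalized box'' with a step-function floor and ceiling; in particular the only non-rectangular, non-box features that occur are staircase-shaped.

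Given this, the canonical (vertical) decomposition is straightforward to carry out and, more importantly, to charge. Any $2$-face $f$ is a rectilinear region (a rectangle, or a monotone staircase region, possibly clipped by the coordinate planes) inside an axis-parallel plane with, say, $v_f$ boundary vertices, and such a region splits into $O(v_f)$ axis-parallel rectangles by a standard trapezoidal-type decomposition; for the rectangular $2$-faces this cost is just $O(1)$. For a $3$-cell $C$ one extrudes the rectangle decompositions of its floor and ceiling in the $\pm z$ direction: by coordinate-monotonicity the floor and ceiling of $C$ are step functions over the rectilinear base of $C$ whose discontinuities are exactly (the projections of) the vertical walls on $\partial C$, so overlaying these discontinuities refines the base into $O(v_C)$ rectangles, where $v_C$ is the number of vertices on $\partial C$, and extruding gives $O(v_C)$ boxes. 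Hence the number of pieces created in the closure of each face is proportional to the number of vertices on its boundary, and summing over the eight octants proves the lemma.

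The step I expect to be the main obstacle is precisely the $3$-cell extrusion: one must set up the canonical decomposition so that the boxes produced inside a cell $C$ are charged only to features on $\partial C$ and do not accumulate any cost from boxes of $S$ that merely pass near $C$ — this is exactly where the downward-closed, staircase structure of corner boxes is essential, since for a general arrangement of axis-parallel boxes the canonical decomposition of a single cell can be asymptotically larger than its boundary complexity. A secondary and routine point is to verify that inserting the three coordinate planes, and gluing the eight per-octant decompositions back together, keeps the decomposition canonical and changes its size by only a constant factor.
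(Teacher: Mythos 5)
Your proposal is correct and is essentially the paper's own argument: both cut along the three coordinate planes into octants, use the fact that within an octant every cube is an origin-corner box so that each cell is coordinate-monotone (upward closed inside the intersection box, with a staircase floor and a flat top reached by every vertical line through the cell), project the boundary features onto a reference face, decompose the resulting planar rectilinear picture into $O(v_f)$ rectangles, and erect vertical prisms, charging everything to the vertices on the cell's boundary. The only cosmetic difference is the direction of the projection (the paper lifts edges and vertices up to the top face $g$ and drops prisms down, you extrude from the base), so there is nothing substantive to add.
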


\begin{proof}
    First we break the arrangement into eight octants by the three
    axis planes ($xy$, $yz$ and $xz$ planes).  We will describe how to
    decompose the arrangement in the positive octant, and by symmetry
    the construction would apply to the whole arrangement.

    So, let $\face$ be a 3d face of the arrangement (when clipped to
    the positive octant). Let $I$ be the cubes of $S$ that contain
    $\face$, and similarly, let $B$ be the set of cubes of $S$ that
    contribute to the boundary of $\face$, but do not include $\face$
    in their interior. As such, we have that
    \[
    \face = \mathrm{closure}\pth{ \pth{\bigcap_{\cube \in I} \cube}
       \;\setminus \; \pth{\bigcup_{\cube' \in B} \cube'}}.
    \]
    (If the set $I$ is empty, we will add a fake huge cube to ensure
    $f$ is bounded.) Now, the first term is just an axis-parallel
    box. Intuitively, the second term (the ``floor'' of $\face$) is a
    (somewhat bizarre) collection of ``stairs''. Note, that any
    vertical line that intersects $\face$, intersects it in an
    interval. In particular, let $\faceA$ the top face (in the $z$
    direction) of $\face$, and observe that, since all the cubes of
    $S$ contain the origin, it must be that any line that intersect
    $\face$ must also intersect $\faceA$. As such, let us project all
    the edges and vertices of $\face$ upward till the hit
    $\faceA$. This results in a collection $W$ of (interior) disjoint
    segments that partition (the rectangular polygon) $\face$. We
    perform a vertical decomposition of the planar arrangement formed by
    $\ArrX{W}$ (including the outer face of this arrangement, which is
    $\faceA$). This results in $O\pth{ \cardin{\face}}$ collection of
    (interior) disjoint rectangles that cover $\faceA$, where
    $\cardin{\face}$ is the number of vertices on the boundary of
    $\face$. Furthermore, for such a rectangle $\rect$, there is no
    edge or vertex of $\face$, such that their vertical projection
    lies in the interior of $\rect$. Namely, we can erect a vertical
    prism for each face of the vertical decomposition of $\ArrX{W}$,
    till the prism hits the bottom boundary of $\face$. This result in
    a decomposition of $\face$ into $O(\cardin{\face})$ disjoint
    boxes, as required.
\end{proof}
\medskip

\lemref{shallow:cubes} implies that an the arrangement $\ArrX{S}$, can
be decomposed into (canonical) boxes, in such a way that the number of
boxes of certain depth $t$, is proportional to the number of vertices
of $\ArrX{S}$ of this depth. This implies that we can apply the
shallow cutting lemma to $S$ (we remind the reader that all the
axis-parallel unit cubes of $S$ contain the origin).

This is sufficient to imply $O(1)$ approximation to
multi-cover. Indeed, let $\Instance = \pth{\PntSet, \Family}$ be the
given instance of multi-cover, where $\Family$ is a set of unit-cubes
in three dimensions. Let $G$ be the unit grid, and for any point $\pntA \in G$, let
$\Family_\pntA$ be the set of cubes of $\Family$ that contains $\pnt$
(for the simplicity of exposition, we assume that every cube of
$\Family$ is contained in exactly one such set, as this can be easily
guaranteed by shifting $G$ slightly). Next, solve the \LP associated
with $\Instance$, and associate a point $\pnt \in \PntSet$ with $\pntA
\in G$, if the depth of $\pnt$ in $\Family_\pntA$ is at least $1/8$
(if $\pnt$ can be associated with several such instances, we pick the
one that provides maximum coverage for $\pnt$). Let $\PntSet_\pntA$ be
the resulting set of points. Thus, for any point in $\pntA \in G$,
there is an associated instance of multi-cover $\pth{\PntSet_\pntA,
   \Family_\pntA}$. Clearly, a constant factor approximation for each
of these instances, would lead to a constant factor approximation for
the whole problem.

Now, $\Family_\pntA$ is made of cubes all containing a common point,
and as such \lemref{shallow:cubes} implies that shallow cutting would
work for it. In particular, we can now apply the algorithm of
\thmref{main} to this instance, and get a constant factor
approximation (here, implicitly, we also used the fact that the union
complexity of $n$ axis-parallel unit cubes is linear).    This
implies the following theorem.
% \thmref{cubes}.

\begin{theorem}
    Let $\Instance = (\PntSet, \Family)$ be an instance of multi-cover
    formed by a set $\PntSet$ of points in $\Re^3$, and a set
    $\Family$ of axis-parallel unit cubes. Then, one can compute, in
    randomized polynomial time, a subset of cubes of $\Family$
    that meets all the required demands, and is of expected size
    $O\pth{\PriceF}$, where $\PriceF$ is the value of an optimal
    fractional solution to \LP.

    \thmlab{cubes}
\end{theorem}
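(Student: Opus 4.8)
The plan is to reduce the unit-cube multi-cover problem to many independent instances, each of which has the structure needed to invoke \thmref{main} (the halfspace result), via the grid-shifting decomposition sketched just before the theorem statement. First I would solve the \LP associated with $\Instance$ to obtain weights $x_i$ on the cubes. Then, fixing a unit grid $G$ (perturbed slightly so that each cube of $\Family$ lies in a unique grid cell, hence is assigned to a unique grid point $\pntA\in G$ which it contains), I form for each $\pntA\in G$ the subfamily $\Family_\pntA$ of cubes assigned to $\pntA$, and I assign each point $\pnt\in\PntSet$ to some $\pntA$ for which the fractional depth $\Cover{\pnt}{\Family_\pntA}\ge 1/8$; such a $\pntA$ must exist because $\pnt$ has total fractional depth at least its demand $\ge 1$ and there are only a constant number (at most $8$) of grid points whose cubes can contain $\pnt$, so one of them carries at least a $1/8$ fraction of the coverage. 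Let $\PntSet_\pntA$ be the resulting set of points assigned to $\pntA$; the residual demands of points in $\PntSet_\pntA$ are covered, up to a constant factor, by the restriction of the fractional solution to $\Family_\pntA$ (rescaling $x$ by $8$ makes it feasible for $(\PntSet_\pntA,\Family_\pntA)$). Since the subfamilies $\Family_\pntA$ partition $\Family$, summing any constant-factor solutions to the instances $(\PntSet_\pntA,\Family_\pntA)$ yields a constant-factor solution to $\Instance$, and the fractional optima add up so that $\sum_\pntA \PriceF(\PntSet_\pntA,\Family_\pntA)=O(\PriceF(\Instance))$.

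Next I would argue that each instance $(\PntSet_\pntA,\Family_\pntA)$ admits an $O(1)$ approximation. The point is that all cubes in $\Family_\pntA$ share a common interior point (namely $\pntA$, or the origin after translation), so \lemref{shallow:cubes} applies: the arrangement $\ArrX{\Family_\pntA}$ decomposes into canonical axis-parallel boxes in such a way that the number of cells of any given depth $t$ is proportional to the number of arrangement vertices of that depth. Combined with the linear union complexity of $n$ axis-parallel unit cubes \cite{bsty-vdhdc-98} and the linear arrangement complexity of unit cubes, this gives exactly the hypotheses needed to run the shallow-cutting machinery of \thmref{shallow:cutting} (equivalently, the specialized \lemref{decomposition}), with the cube-containing-a-common-point structure playing the role that halfspaces played in \secref{main}'s analysis. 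One then follows the algorithm of \thmref{main} verbatim on $(\PntSet_\pntA,\Family_\pntA)$: put the heavy cubes into the solution, build a $(1/4\PriceF'_\pntA)$-cutting of the residual system, take a $c\hat x$-sample, and clean up the residual demand using \lemref{energy:small}; \lemref{small:set} (with $K$ an absolute constant coming from the $O(Wt^2)$-type bound on low-depth cells) guarantees that the residual total demand after sampling is $O(\PriceF'_\pntA)$, so the instance has a cover of expected size $O(\PriceF(\PntSet_\pntA,\Family_\pntA))$.

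Finally I would assemble the pieces: the union over all $\pntA\in G$ of these per-instance covers is a feasible multi-cover for $\Instance$, and its expected size is $\sum_\pntA O(\PriceF(\PntSet_\pntA,\Family_\pntA)) = O(\PriceF(\Instance))$, as claimed; the running time is polynomial since there are polynomially many nonempty grid cells and each invocation of \thmref{main}'s algorithm is polynomial. The main obstacle here is not any of the sampling analysis — that is reused as a black box — but rather verifying cleanly that the cube-sharing-a-point structure is genuinely enough to instantiate \thmref{shallow:cutting}: one must check that \lemref{shallow:cubes} supplies the "canonical decomposition whose complexity is proportional to the number of face vertices" that the proof of the shallow-cutting lemma requires, and that the low-depth cell count it produces has the form $O(Wt^2)$ (or at worst $O(V K^t)$ with constant $K$) needed by \lemref{small:set}. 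The grid-shifting reduction itself is routine, but one should be a little careful that the $1/8$-depth assignment leaves every point with enough residual fractional coverage inside its chosen $\Family_\pntA$ to make the rescaled \LP solution feasible there — a constant-factor bookkeeping step that I would state explicitly rather than gloss over.
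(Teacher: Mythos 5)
Your proposal follows essentially the same route as the paper: solve the \LP, use the (slightly shifted) unit grid to split $\Family$ into subfamilies of cubes sharing a common grid point, assign each point to the grid point carrying at least a $1/8$ fraction of its fractional coverage, and then run the halfspace algorithm of \thmref{main} on each sub-instance, using \lemref{shallow:cubes} together with the linear union complexity of unit cubes to supply the shallow cutting. The only difference is that you spell out the rescaling-by-$8$ feasibility bookkeeping that the paper leaves implicit, which is a correct and welcome clarification rather than a deviation.
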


\section{Conclusions}
We presented improved approximation algorithms for set multi-cover  in
geometric settings. Our key insight was to produce a ``small'' instance
of the problem by clustering the given instance. This in turn was
done by using a variant of shallow cuttings. We believe that this
approach might be useful for other problems in geometric settings. 

An interesting open problem, is to obtain improved algorithms for the
set cover and the set multi-cover problems in geometric settings when
the sets/shapes have costs associated with them and the goal is to
find a cover of lowest cost. Can the results from
\cite{c-apca-93,bg-aoscf-95, cv-iaags-07} and this paper be extended
to this more general setting?

Recently, Mustafa and Ray \cite{mr-pghsp-09} gave a PTAS for the
problem of covering points by disks in the plane; their algorithm is
based on local search. It would be interesting to see if this
algorithm can be adapted to the multi-cover problem.

% ------------------------------------------------------------------
% ------------------------------------------------------------------

% \section*{Acknowledgments}

% -------------------------------------------------------------------------

\NotSarielComp{\bibliographystyle{alpha}}%
\SarielComp{\bibliographystyle{salpha}}%
%\bibliography{shortcuts,geometry}
\bibliography{multi_cover}

% ----------------------------------------------------------------\
% ----------------------------------------------------------------\
% ----------------------------------------------------------------\
% ----------------------------------------------------------------\

% \new command{\FAll}{\mathsf{f}_{all}}
% 

\appendix

\section{A shallow cutting lemma for ``nice'' shapes}
\apndlab{shallow:cutting}

In this section, we prove \thmref{shallow:cutting}, a variant of the
shallow cutting lemma of \Matousek in a slightly different setting.
We include the details for the sake of completeness, which are not
hard in light of \Matousek's work \cite{m-rph-92}. Our description is
somewhat informal, for simplicity. The family of shapes that we
consider needs to satisfy the assumptions outlined in
\secref{generalization}.

\paragraph{Building $(1/r)$-cuttings.}
When computing cuttings, one first picks a random sample $\RSample$ of
size $r$ of the objects of $\Family$, and computes the decomposition
$\ArrVD{\RSample}$ of the arrangement of the random sample.  For a cell
$\Cell$ in this decomposition, let $\CList{\Cell}$ be the list of
shapes of $\Family$ whose boundaries intersect the interior of
$\Cell$.  If $\cardin{ \CList{\Cell}} \leq n/r$ then it is acceptable,
and we add it to the resulting cutting.

Otherwise, we need to do a local patching up, by partitioning each such
cell further. Specifically, let $t_\Cell = \ceil{ \CList{\Cell} /
   \pth{ n/r} }$ be the \emphi{excess} of $\Cell$. We take a random
sample $\RSample_\Cell$ of size $O(t_\Cell \log(t_\Cell))$ from
$\CList{\Cell}$. With constant probability, this is a $1/t_\Cell$-net
of $\CList{\Cell}$ (for ranges formed by our decomposition).
We verify that it is such a net, and if not, we resample, and repeatedly
do so until we obtain a $1/t_\Cell$-net. To do the verification,
we build the arrangement of $\RSample_\Cell$ inside $\Cell$, and
compute its decomposition, and check that all the cells in this decomposition
intersect at most $n/r$ boundaries of the shapes of $\Family$. Let
$\Decomp{\Cell}$ denote this decomposition of $\Cell$ (if $\Cell$ has
excess at most $1$, then we just take $\Decomp{\Cell}$ to be $\brc{\Cell}$).
Clearly, the set
\[
\bigcup_{\Cell \in \ArrVD{\RSample}} \Decomp{\Cell}
\]
forms a decomposition of $\Re^d$ into regions of constant complexity,
and each region intersects at most $n/r$ boundaries of the shapes of
$\Family$.

It is well known that the complexity of the resulting cutting is (in
expectation) $O( r^d )$ \cite{cf-dvrsi-90}

Let $\Cutting$ denote the resulting cutting.

\paragraph{Size of cutting at a certain depth.}

Here we are interested in the number of cells in the arrangement
$\ArrVD{\RSample}$ that cover ``shallow'' portions of
$\ArrX{\Family}$.  Formally, the \emphi{depth} of a point $\pnt \in
\Re^d$, is the number of shapes of $\Family$ that cover it. Let
$\FAtMost{k}{n}$ denote the maximum number of vertices of depth at
most $k$ in an arrangement of $n$ shapes.  Clarkson and Shor
\cite{cs-arscg-89} showed that $\FAtMost{k}{n} = O\pth{ k^d
   \FZero(n/k)}$.  Specifically, we are interested in the number of
cells of $\Cutting$ that contain points of depth at most $k$.  The
\emphi{$k$\th level} is the closure of all the points on the boundary
of the shapes that are contained inside $k$ shapes.

Now, the expected number of vertices of $\ArrX{\RSample}$ that are of
depth at most $k$ in $\ArrX{\Family}$ is
\[
O\pth{\pth{\frac{r}{n}}^d k^d \FZero(n/k)} = O\pth{ \pth{\frac{rk}{n}}^d
   \FZero(n/k)},
\]
since for a given vertex of $\ArrX{\Family}$ of depth at most $k$, the
probability that all $d$ shapes that define it will picked to be in
$\RSample$ is $O\pth{(r/n)^d}$.  This unfortunately does not bound the
number of cells in the decomposition of $\ArrVD{\RSample}$ that
contain points of depth at most $k$, since we might have cells that
cross the $k$\th level.

So, let $X \subseteq \Re^d$ be a fixed subset of space, and let
$x(\cardin{\RSample})$ be the number of cells of $\ArrVD{\RSample}$
that intersect $X$. Let $x(r)$ denote the maximum value of
$x(\cardin{\RSample})$ over all samples $\RSample$ of size $r$.
Similarly, let $x^t(\RSample)$ denote the number of cells in
$\ArrVD{\RSample}$ that intersect $X$ and have excess more than $t$
(i.e., there are at least $t \cdot n/r$ shapes intersecting this
cell). 

%\begin{lemma}[Exponential decay lemma.]
%    For $t \geq 1$, let $\RSample$ be a random sample of size $r$ from
%    $\Family$, then $\Ex{ x^t(\RSample)} = O\pth{2^{-t}
%       \Ex{x(\RSample)}}$.
%\end{lemma}

Chazelle and Friedman \cite{cf-dvrsi-90} showed an exponential decay
lemma stating that $\Ex{ x^t(\RSample)} = O\pth{2^{-t}
   \Ex{x(\RSample)}}$.  We comment that, in fact, one can prove
directly from the Clarkson-Show technique a polynomial decay lemma,
which is sufficient to prove the shallow-cutting lemma. This
polynomial decay lemma is implicit in the work of d{}e Berg and
Schwarzkopf \cite{bs-ca-95} although it was not stated explicitly (it
also made a stealthy appearance in Clarkson and Varadarajan work
\cite{cv-iaags-07}, but \cite{bs-ca-95} seems to be the earliest
reference).

\begin{lemma}[Polynomial decay lemma.]
    For $t \geq 1$, let $\RSample$ be a random sample of size $r$ from
    $\Family$, and let $c \geq 1$ be an arbitrary constant. Then $\Ex{
       x^t(\RSample)} = O\pth{x(r) / t^c }$.
\end{lemma}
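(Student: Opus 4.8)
\medskip\noindent\emph{Proof plan.}
The plan is to run the standard Clarkson--Shor sub-sampling comparison \cite{cs-arscg-89,cf-dvrsi-90}. Since the cells of the decomposition have constant descriptive complexity, there is an absolute constant $b$ so that each cell $\Cell$ that can arise is \emph{defined} by a subset $D_\Cell\subseteq\Family$ with $\cardin{D_\Cell}\le b$, and $\Cell$ appears in $\ArrVD{\RSample}$ exactly when $D_\Cell\subseteq\RSample$ and $\RSample$ misses the \emph{conflict list} $L_\Cell$ of $\Cell$, i.e.\ the set of shapes whose boundary meets the interior of $\Cell$; put $\ell_\Cell=\cardin{L_\Cell}$. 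By definition $\Cell$ has excess more than $t$ iff $\ell_\Cell>t\,n/r$, and whether $\Cell$ meets the fixed region $X$ depends on $\Cell$ alone. Hence the quantity to bound is
\[
\Ex{x^t(\RSample)}\;=\;\sum_{\Cell:\ \Cell\cap X\neq\emptyset,\ \ell_\Cell>t\,n/r}\ \Prob{\Cell\in\ArrVD{\RSample}},
\]
the sum running over all potential cells.

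The first step is to fix a fresh uniformly random sample $\RSample'$ of size $\ceil{r/2}$ and to compare, cell by cell, $\Prob{\Cell\in\ArrVD{\RSample}}$ with $\Prob{\Cell\in\ArrVD{\RSample'}}$. Writing each probability as a ratio of binomial coefficients --- or, equivalently, in the Bernoulli$(r/n)$ model where $\Prob{\Cell\in\ArrVD{\RSample}}=(r/n)^{\cardin{D_\Cell}}(1-r/n)^{\ell_\Cell}$ exactly --- a short estimate shows that for every cell with $\ell_\Cell>t\,n/r$,
\[
\Prob{\Cell\in\ArrVD{\RSample}}\;\le\;2^{b}\,e^{-t/4}\;\Prob{\Cell\in\ArrVD{\RSample'}} .
\]
The decay comes from the factor $(1-r/n)^{\ell_\Cell}/(1-\ceil{r/2}/n)^{\ell_\Cell}\le e^{-(r-\ceil{r/2})\ell_\Cell/n}\le e^{-t/4}$, which uses precisely the assumption $\ell_\Cell>t\,n/r$ (the constant $1/4$ is unimportant); the prefactor $2^{b}$ bounds $(r/\ceil{r/2})^{\cardin{D_\Cell}}$.

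Now sum this pointwise inequality over the cells indexing $\Ex{x^t(\RSample)}$ and then drop the restriction $\ell_\Cell>t\,n/r$ (all terms are nonnegative), to get
\[
\Ex{x^t(\RSample)}\;\le\;2^{b}e^{-t/4}\sum_{\Cell:\ \Cell\cap X\neq\emptyset}\Prob{\Cell\in\ArrVD{\RSample'}}\;=\;2^{b}e^{-t/4}\,\Ex{x(\RSample')}\;\le\;2^{b}e^{-t/4}\,x\pth{\ceil{r/2}} .
\]
Finally, $\ceil{r/2}\le r$ and the number of decomposition cells meeting a fixed region grows at most polynomially in the sample size, so $x\pth{\ceil{r/2}}=O(x(r))$; and for any fixed $c\ge1$ we have $e^{-t/4}=O(1/t^{c})$ for all $t\ge1$, with the constant depending on $c$. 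Combining these yields $\Ex{x^t(\RSample)}=O\pth{x(r)/t^{c}}$, as claimed. (For $t>r$ the condition $\ell_\Cell>t\,n/r>n$ is vacuous, so $x^t(\RSample)=0$; the trivial case of bounded $r$ can be absorbed into the constant.)

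The one step deserving care is the substitution $x\pth{\ceil{r/2}}=O(x(r))$: for vertical (canonical) decompositions $x(\cdot)$ is not literally monotone in the sample size, so one invokes the (standard) polynomial growth of these cell counts --- or, equivalently, states the lemma with $x(\ceil{r/2})$ on the right, which is all its application needs. An alternative that keeps the sample size fixed at $r$ is to bound the $c$-th moment $\Ex{\sum_{\Cell\in\ArrVD{\RSample},\ \Cell\cap X\neq\emptyset}\ell_\Cell^{\,c}}=O\pth{(n/r)^{c}\,x(r)}$ directly by the Clarkson--Shor technique \cite{cs-arscg-89} and then apply Markov's inequality at the threshold $t\,n/r$; this reproves the polynomial decay with $x(r)$ itself on the right-hand side.
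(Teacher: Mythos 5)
Your primary argument is correct, but it takes a genuinely different route from the paper's: you in effect reprove the Chazelle--Friedman exponential decay bound by comparing, cell by cell, the probability of appearing in $\ArrVD{\RSample}$ for a sample of size $r$ against a sample of size $\ceil{r/2}$, obtaining $\Ex{x^t(\RSample)} \le 2^{b} e^{-\Omega(t)}\, x\pth{\ceil{r/2}}$ and then weakening exponential to polynomial decay. The paper deliberately sidesteps this (it cites the exponential decay lemma but remarks that polynomial decay suffices and can be had directly from Clarkson--Shor): it stays at sample size $r$, bounds the $c$-th moment $\Ex{\sum_{\Cell\in\ArrVD{\RSample}} \cardin{\CList{\Cell}}^{c}} = O\pth{(n/r)^{c}\, x(r)}$ by the Clarkson--Shor technique, and concludes by a Markov-type argument at the threshold $t(n/r)$ --- which is exactly the alternative you sketch in your closing sentence. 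As for what each buys: your comparison argument yields the stronger exponential decay, but it needs the locally-defined-cell framework (bounded defining sets, appearance iff $D_\Cell\subseteq\RSample$ and $\RSample$ avoids the conflict list) and the conversion $x\pth{\ceil{r/2}} = O(x(r))$, which you rightly flag as the delicate step; note that the paper's own moment bound hides an analogous constant-fraction-sample issue inside the Clarkson--Shor step (the paper declares its treatment informal), and the downstream use in the shallow-cutting bound is insensitive to whether the right-hand side reads $x(r)$ or $x\pth{\ceil{r/2}}$. The moment-plus-Markov route is simply shorter once Clarkson--Shor is taken as a black box, which is why the paper chose it.
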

\begin{proof}
    By the Clarkson-Shor technique \cite{cs-arscg-89, c-arscg-88}, we
    have that 
    \[
    \Ex{ \sum_{\Cell \in \ArrVD{\RSample}} \cardin{ \CList{\Cell}}^c }
    = O\pth{ \pth{ \frac{n}{r} }^c \Ex{x(\RSample)}}
    = O\pth{ \pth{ \frac{n}{r} }^c x(r)}.
    \]
    In particular, if there are $ x^t(\RSample)$ cells in
    $\ArrVD{\RSample}$ with conflict-list of size larger than
    $t(n/r)$, then they contribute to the left size of the above
    equation the quantity $    x^t(\RSample) (t(n/r))^c$. We conclude
    that 
    \[
    \Ex{ x^t(\RSample)  (t(n/r))^c} =
    O\pth{ \pth{ \frac{n}{r} }^c x(r)},
    \]
    which implies that $\Ex{ x^t(\RSample) } = O\pth{ x(r)/t^c}$, as
    claimed.    
\end{proof}

\begin{lemma}
    The expected number of cells in the $(1/r)$-cutting $\Cutting$ of
    $\Family$ that contain points of depth at most $k$ is bounded by
    \[
    O\pth{ \pth{\frac{r k}{n} + 1}^d \FZero\pth{\frac{n}{k}}}.
    \]
\end{lemma}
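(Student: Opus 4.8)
The plan is to bound the number of cells of the final cutting $\Cutting$ that meet the set $X$ of all points of depth at most $k$ in $\ArrX{\Family}$, in two stages: first control how many cells of the first-level decomposition $\ArrVD{\RSample}$ meet $X$, and then pay for the local refinement $\Decomp{\Cell}$ performed on each such cell. First I would observe that every cell of $\Cutting$ lies inside a single cell $\Cell$ of $\ArrVD{\RSample}$, and that a piece of $\Decomp{\Cell}$ can meet $X$ only if $\Cell$ itself does; hence the quantity we want is at most $\sum_{\Cell \in \ArrVD{\RSample},\ \Cell \cap X \neq \emptyset} \cardin{\Decomp{\Cell}}$. If $\Cell$ has excess $t_\Cell$, then $\Decomp{\Cell}$ is the canonical decomposition of the arrangement of a sample of $O(t_\Cell \log t_\Cell)$ shapes inside $\Cell$, and since an arrangement of $m$ such shapes has complexity $O(m^d)$, we get $\cardin{\Decomp{\Cell}} = O\pth{(t_\Cell \log t_\Cell)^d}$ (and $O(1)$ when $t_\Cell \le 1$).

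Next I would show the refinement is cheap by a decay argument. Write $x(\RSample)$ for the number of cells of $\ArrVD{\RSample}$ meeting $X$; the proof of the polynomial decay lemma, applied with this fixed set $X$, in fact yields $\Ex{x^t(\RSample)} = O\pth{\Ex{x(\RSample)}/t^c}$ for any constant $c$. Taking $c$ larger than $2d$ and summing, over dyadic values of $t$, the bound $O\pth{(t\log t)^d}$ against the $O\pth{\Ex{x(\RSample)}/t^c}$ cells whose excess lies in a dyadic band around $t$ (and treating the $O(\Ex{x(\RSample)})$ cells of excess at most two separately), the series converges, so $\Ex{\,\sum_{\Cell \cap X \neq \emptyset} \cardin{\Decomp{\Cell}}\,} = O\pth{\Ex{x(\RSample)}}$. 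Everything thus reduces to bounding $\Ex{x(\RSample)}$.

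To bound $\Ex{x(\RSample)}$ I would argue as follows. A cell $\Cell$ of $\ArrVD{\RSample}$ meeting $X$ contains a point of depth at most $k$, and moving inside $\Cell$ from that point to any vertex of $\ArrX{\RSample}$ on the boundary of $\Cell$ changes the depth in $\ArrX{\Family}$ by at most the number of shapes of $\Family$ whose boundaries cross the interior of $\Cell$, i.e.\ by at most $\cardin{\CList{\Cell}}$. So a cell $\Cell$ with $\Cell\cap X\neq\emptyset$ and $\cardin{\CList{\Cell}}\le s$ can be charged to a vertex of $\ArrX{\RSample}$ of depth $O(k+s)$ in $\ArrX{\Family}$. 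Splitting the cells once more by $\cardin{\CList{\Cell}}$ — using the Chazelle--Friedman exponential decay to control the cells with $\cardin{\CList{\Cell}} > n/r$, and plugging in the earlier computation that the expected number of vertices of $\ArrX{\RSample}$ of depth at most $k$ in $\ArrX{\Family}$ is $O\pth{(rk/n)^d \FZero(n/k)}$ (applied with $k$ replaced by $O(k)$, absorbed into the constant) — the contributions telescope to $O\pth{(rk/n + 1)^d \FZero(n/k)}$. The additive $+1$ inside the parentheses absorbs the regime in which $\ArrX{\RSample}$ has essentially no shallow vertices (e.g.\ $r$ small, or the whole arrangement shallow), where $\FZero(n/k)$ already dominates the constant number of cells present. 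Combining the three steps gives the claimed bound.

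The hard part will be this third step. A cell of $\ArrVD{\RSample}$ can meet the shallow region $X$ while all of its defining vertices are deep in $\ArrX{\Family}$, so one cannot simply charge such a cell to a shallow vertex; the depth gap is exactly the conflict-list size, and it must be paid for by the decay in the number of cells whose conflict lists are large multiples of $n/r$. The delicate bookkeeping is checking that this decay is fast enough to dominate simultaneously the $(t\log t)^d$ growth of the refinement in Step~2 and the $(k+s)^d\,\FZero(n/(k+s))$ growth of the shallow-vertex count in Step~3.
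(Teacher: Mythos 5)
Your argument is essentially the paper's own proof, just reorganized: both charge each cell of the final cutting to its parent cell of $\ArrVD{\RSample}$ at a refinement cost of $O\pth{(t\log t)^d}$ for excess $t$, control high-excess cells meeting the shallow region via the polynomial/exponential decay lemma, and count shallow first-level cells by charging them (through the conflict-list bound on depth variation) to vertices of $\ArrX{\RSample}$ of depth at most $k+t(n/r)$, bounded by the Clarkson--Shor estimate $O\pth{(r\alpha/n)^d\,\FZero(n/\alpha)}$. The paper simply folds your Steps~2 and~3 into one sum over excess levels ($Y=O(\sum_t X_t\,(t\log t)^d)$ with $\Ex{X_t}=O(\gamma(t)/t^{4d})$), and the bookkeeping you flag as the hard part is settled exactly as you anticipate, by choosing the decay exponent $c$ a sufficiently large multiple of $d$ so the $1/t^c$ decay beats both polynomial growth factors.
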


\begin{proof}
    If a cell $\Cell$ of $\ArrVD{\RSample}$ has excess $t$, and it
    intersects the $k$\th level, then all its points have depth at most
    $k + t(n/r)$. The expected number of vertices of
    $\ArrVD{\RSample}$ of depth at most $\alpha(t) = k + t(n/r)$ is
    \[
    \gamma(t) = O\pth{ \pth{ \frac{r\alpha(t)}{n}}^d
       \FZero\pth{\frac{n}{\alpha(t)}}}
    \]
    which also (asymptotically) bounds the number of cells in
    $\ArrVD{\RSample}$ having depth smaller than $\alpha(t)$. Let
    $X_t$ denote the number of cells with excess $t$ (or more) with
    depth at most $\alpha(t)$. Setting $c=O(d)$, we have by the
    polynomial decay lemma, that 
    \[
    \Ex{X_t} = O\pth{  \gamma(t)/t^{4d} } = O\pth{ 
       \pth{\frac{r\alpha(t)}{t^{4}n}}^d \FZero\pth{ \frac{n}{\alpha(t)}}}.
    \]
    Now, the number of cells of the cutting $\Cutting$ that have
    points with depth at most $k$ is bounded by
    \[
    Y = O \pth{ \sum_{t=0}^\infty X_t \cdot \pth{t \log t}^d }
    \]
    Thus, we have
    \begin{eqnarray*}
        \Ex{Y} &=& O \pth{ \sum_{t=0}^\infty \Ex{X_t} \cdot t^{O(d)} } %
        %\\
        % 
        =%
        O \pth{ \sum_{t=0}^\infty  \pth{\frac{r \alpha(t)}{t^c n}}^d
           \FZero\pth{\frac{n}{\alpha(t)}} \cdot t^{O(d)} }% \\
        \\
        &=& %
        O \pth{ \pth{\frac{ r}{n}}^d
           \FZero\pth{\frac{n}{k}}
           \sum_{t=0}^\infty  t^{O(d)-c}
           \pth{{k + t (n/r)\VBig{0.5cm}}}^d
        } 
        =%
        O \pth{  \pth{\frac{ r}{n}}^d
           \FZero\pth{\frac{n}{k}}
           \pth{k + n/r}^d
           \sum_{t=0}^\infty  t^{O(d)-c}
        } \\
        &=&%
        O \pth{  \pth{\frac{ kr}{n} + 1}^d
           \FZero\pth{\frac{n}{k}}       
        },
    \end{eqnarray*}
    by setting $c$ to be sufficiently large.
    % \aftermathA
\end{proof}

\bigskip

The above proves \thmref{shallow:cutting} by using replication to
represent weights.

\end{document}